\documentclass[12pt,a4paper]{article}
\usepackage{geometry}
\geometry{top=3cm, bottom=3cm, left=2.5cm, right=2.5cm}
\usepackage{amsmath,amssymb}
\usepackage{amsfonts}
\usepackage{amsthm}
\usepackage{mathrsfs}
\numberwithin{equation}{section}
\usepackage{comment}
\newcommand{\jap}[1]{\langle #1 \rangle}

\def\a{\alpha}
\def\b{\beta}
\def\c{\gamma}

\def\e{\varepsilon}
\def\f{\varphi}
\def\g{\psi}

\def\m{\mu}

\def\x{\xi}
\def\y{\eta}

\newcommand{\Op}{\mathrm{Op}}
\def\re{\mathbb{R}}

\def\pa{\partial}

\renewcommand{\Re}{\text{{\rm Re}\;}}
\renewcommand{\Im}{\text{{\rm Im}\;}}

\newcommand{\supp}{\text{{\rm supp}\;}}


\newtheorem{thm}{Theorem}[section]
\newtheorem{lem}[thm]{Lemma}
\newtheorem{prop}[thm]{Proposition}

\theoremstyle{definition}
\newtheorem{defn}{Definition}

\theoremstyle{remark}



\title{Remarks on the geodesically completeness and the smoothing effect on asymptotically Minkowski spacetimes}

\author{Kouichi Taira\thanks{e-mail:ktaira@fc.ritsumei.ac.jp}}

\date{}



\begin{document}

\maketitle

\begin{abstract}
In this note, we study a geometric property of asymptotically Minkowski spacetimes and an analytic property of the Klein-Gordon operator. Precisely, our first main results show that asymptotically Minkowski spacetimes are geodesically complete under a null non-trapping condition. Secondly, we prove that Sobolev index of a real principal type estimate used in the previous work is actually optimal.
\end{abstract}

\section{Introduction}

Let $g_0$ be a Minkowski metric on $\re^{n+1}$ and $g_0^{-1}$ be its dual metric:
\begin{align*}
g_0=-dx_1^2+dx_2^2+...+dx_{n+1}^2,\quad g_0^{-1}=-\pa_{x_1}^2+\pa_{x_2}^2+...+\pa_{x_{n+1}}^2=(g^{ij}_0)_{i,j=1}^n,
\end{align*}
where we denote
\begin{align*}
x=(t,y)\in \re\times \re^n,
\end{align*}
that is, $t$ is the time variable and $y$ is the space variable in the spacetime.
We denote $\jap{x}=(1+|x|^2)^{\frac{1}{2}}$ and introduce the function space
\begin{align*}
S^k(\re^{n+1}):=\{a\in C^{\infty}(\re^{n+1})\mid |\pa_x^{\a}a(x)|\leq C\jap{x}^{k-|\a|}\},\quad \quad k\in \re.
\end{align*}

\begin{defn}
A Lorentzian metric $g$ on $\re^{n+1}$ is called asymptotically Minkowski if the inverse matrix $g^{-1}(x)=(g^{jk}(x))_{j,k=1}^n$ of $g(x)$ satisfies $g^{jk}- g_0^{jk}\in S^{-\m}(\re^{n+1})$ for some $\m>0$.

\end{defn}

The Feynman propagator, which is an inverse of $-\Box_g$ and satisfies a certain wavefront condition, is a fundamental object in Quantum field theory. In \cite{DeZ1}, \cite{DeZ2}, \cite{GW1}, \cite{GW2} and \cite{Va1}, the Feynman propagator is constructed on various spacetimes including asymptotically Minkowski spacetimes. In \cite{T}, it is proved that on asymptotically Minkowski spacetimes, the (anti-)Feynman propagator constructed in \cite{GW1} and \cite{GW2} coincides with the outgoing resolvent of the Klein-Gordon operator. Such identity should hold since for the exact Minkowski spacetime, the Feynman propagator is defined by the outgoing resolvent in physics literatures. See also the review article \cite{GW3}. 

In this short note, we give supplementary results on the geometry and the property of the Klein-Gordon operator on asymptotically Minkowski spacetimes: One is a result on the completeness of asymptotically Minkowski spacetimes and the other is a result on the optimality of the local smoothing estimate used in \cite{T}.

\subsection{First result}

It is a classical question how the essential self-adjointness of the pseudodifferential operator is related to the completeness of the associated Hamilton flow (\cite[Vol II]{RS}, \cite{BMS}). In the case of the Laplace operator $P$ on a semi-Riemannian manifold $M$, this corresponds to a relation between the essential self-adjointness of $P$ and the geodesically completeness of $M$. It is well-known that on the geodesically complete Riemannian manifod, the Laplace operator is essentially self-adjoint on $C_c^{\infty}$. In \cite{BMS} and references therein, such a relation is studied for more general Schr\"odinger type operators on complete Riemmanian manifolds.

In \cite{CB}, it is shown that the completeness of the Hamilton flow is equivalent to the essential self-adjointness on generic closed Lorentzian surfaces. The authors in \cite{CB} conjecture that the completeness of the Hamilton flow implies the essential self-adjointness for non-elliptic operators on closed manifold. In \cite{T}, this conjecture is solved for general real principal type oeprators on one-dimensional torus.
As a related result, in \cite{K}, the author gives an example of a Lorentzian manifold $(M,g)$ which is geodesically complete and globally hyperbolic, but the Laplace operator associated with $g$ is not essentially self-adjoint. 

Recently, it is shown in \cite{Va1} and \cite{NT} that the wave operator on an asymptotically Minkowski spacetime is essentially self-adjoint on $C_c^{\infty}$ under a null non-trapping condition. However, on this manifold, a relation to the geodesically completeness was not revealed. In this note, we show that a null non-trapping condition implies the completeness on this manifold, including timelike and spacelike completeness.

We introduce a non-trapping condition which is a bit weaker than the condition in \cite{GW1}, \cite{GW2}, \cite{NT} and \cite{T} in the sense that the completeness of the null geodesics are not assumed here.

\begin{defn}\cite[Definition 11.17]{B}\label{assnulltra}
Let $(M,g)$ be a Lorentzian manifold. We say that a maximally extended geodesic is forward (resp. backward) non-trapping if for each $t_0\in (-T_0,T_1)$, $\c|_{(t_0,T_1)}$ (resp.  $\c|_{(-T_0,t_0] }$) fails to have compact closure. We way that $\c$ is non-trapping if $\c$ is both forward and backward non-trapping. The Lorentzian manifold $(M,g)$ is called null non-trapping (or null disprisoning) if all non-constant maximally extended null geodesic is non-trapping. 

\end{defn}

\begin{thm}\label{thmcomplete}
Suppose $(\re^{n+1},g)$ is asymptotically Minkowski and null non-trapping in the sense of Definition $\ref{assnulltra}$. Then $(\re^{n+1},g)$ is geodesically complete.
\end{thm}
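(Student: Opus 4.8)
The plan is to show that every inextensible geodesic in $(\re^{n+1},g)$ has infinite affine parameter interval, by ruling out the only obstruction to completeness: a geodesic whose affine parameter is bounded, say $\gamma: [0, T_1) \to \re^{n+1}$ with $T_1 < \infty$ and which cannot be extended past $T_1$. The key dichotomy is whether $\gamma$ escapes to spatial infinity (i.e. $|\gamma(s)| \to \infty$ as $s \to T_1$) or stays in a compact region. I would treat the "stays compact" case first: if $\gamma([0,T_1))$ has compact closure, then standard ODE theory (the geodesic equation is a smooth second-order ODE, and on a compact set the Christoffel symbols are bounded) gives that $\gamma$ extends to $[0, T_1 + \epsilon)$, contradicting maximality. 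So the real content is the escaping case.

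**The escaping case: a priori control of the velocity.**

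Suppose $\gamma(s) \to \infty$. Here I would exploit the asymptotically Minkowski structure: since $g^{jk} - g_0^{jk} \in S^{-\mu}(\re^{n+1})$, the metric coefficients and their derivatives decay, so the Christoffel symbols $\Gamma^i_{jk}(x)$ are $O(\jap{x}^{-1-\mu})$. Writing the geodesic equation $\ddot\gamma^i = -\Gamma^i_{jk}(\gamma)\dot\gamma^j\dot\gamma^k$, I would try a Grönwall-type estimate for $|\dot\gamma(s)|$: as long as $\gamma$ has left a large ball of radius $R$, we have $|\ddot\gamma| \lesssim \jap{\gamma}^{-1-\mu}|\dot\gamma|^2$. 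The integral $\int \jap{\gamma(s)}^{-1-\mu}|\dot\gamma(s)|\,ds$ can be compared with $\int \frac{d}{ds}\jap{\gamma(s)}^{-\mu}$-type quantities, which is finite because $\jap{\gamma}^{-\mu}$ is bounded. This should yield that $|\dot\gamma(s)|$ stays bounded on $[0,T_1)$. Combined with $T_1 < \infty$ this forces $\gamma(s)$ to converge to a finite limit point in $\re^{n+1}$ as $s \to T_1$ — contradicting $\gamma(s)\to\infty$. Thus the escaping case with finite affine parameter is impossible, and this does not even use the non-trapping hypothesis.

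**Where the non-trapping condition enters.**

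The subtlety is that the argument above shows finite-affine-parameter geodesics cannot escape, but it does not by itself exclude an incomplete geodesic that neither escapes nor has compact closure — for instance one that oscillates, repeatedly entering and leaving a large region, with the closure non-compact only because it accumulates at infinity along a subsequence. This is exactly the scenario the null non-trapping condition is designed to exclude, but Definition~\ref{assnulltra} is stated for \emph{null} geodesics, whereas we need completeness for timelike and spacelike ones too. So the genuinely hard step is to transfer the non-trapping information from null to non-null geodesics. I expect one does this via a limiting/rescaling argument: if a timelike or spacelike geodesic $\gamma$ with bounded affine parameter existed, one rescales the velocity near $s = T_1$; in the limit the normalization $g(\dot\gamma,\dot\gamma) = \pm 1$ becomes negligible compared to the blow-up of $|\dot\gamma|$, producing in the limit (after passing to the sphere bundle at infinity, or via the conformally-rescaled metric at infinity) a null geodesic that is trapped — contradicting the hypothesis. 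Making this rescaling limit rigorous, identifying the right compactification in which the limit lives, and checking that the limiting object really is a genuine trapped null geodesic rather than a degenerate curve, is where the work concentrates; the a priori velocity bound from the previous paragraph is the main technical input that keeps the rescaled curves under control.

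**Assembling the pieces.**

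Finally I would combine: (i) the compact-closure case handled by ODE continuation; (ii) the escaping case with finite affine parameter handled by the Grönwall estimate using the symbol decay of the metric; (iii) the remaining case — bounded affine parameter, non-compact closure, non-escaping — reduced via rescaling to a trapped null geodesic, contradicting null non-trapping. Together these show no inextensible geodesic has bounded affine parameter, i.e. $(\re^{n+1},g)$ is geodesically complete. The main obstacle, as noted, is step (iii): the passage from the analytic decay estimates to a clean geometric contradiction with Definition~\ref{assnulltra} in the non-null case.
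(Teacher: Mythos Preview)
Your case (i) contains the real error. You claim that if $\gamma([0,T_1))$ has compact closure then ``standard ODE theory'' extends $\gamma$ past $T_1$. This is correct for Riemannian geodesics (the velocity has constant Riemannian norm, so the full phase-space curve $(\gamma,\dot\gamma)$ stays in a compact subset of $TM$), but it fails for Lorentzian ones: the conserved quantity $g(\dot\gamma,\dot\gamma)$ does not bound $|\dot\gamma|$ in any auxiliary Riemannian metric, because $\dot\gamma$ can run off to infinity along an asymptotically null direction while $g(\dot\gamma,\dot\gamma)$ stays fixed. Incomplete trapped Lorentzian geodesics exist in general, and ruling them out is exactly where the null non-trapping hypothesis enters in the paper (Lemma~\ref{generalcompl}): fix a complete Riemannian metric $h$, pass to a subsequential limit $(q,v)$ of $(\gamma(t_i),\dot\gamma(t_i)/|\dot\gamma(t_i)|_h)$ in the $h$-unit sphere bundle over $K$; if $v$ were null, the null geodesic through $(q,v)$ would be trapped in $K$, contradicting the hypothesis; hence $g(v,v)\neq 0$, and combining this with conservation of $g(\dot\gamma,\dot\gamma)\neq 0$ forces $|\dot\gamma(t_i)|_h$ to stay bounded, after which ODE continuation does apply.

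Conversely, your case (iii) does not occur here and absorbs no work. The asymptotically Minkowski structure yields a convexity (Mourre-type) estimate, Lemma~\ref{clmourre}: $\tfrac{d^2}{dt^2}|y(t)|^2 \geq M|\eta(t)|^2 > 0$ whenever $|y(t)|>R_0$. Hence once a geodesic leaves the ball of radius $R_0$ with $\tfrac{d}{dt}|y|^2>0$ it escapes monotonically; an orbit with unbounded image therefore necessarily satisfies $|y(t)|\to\infty$, so ``non-compact closure but not escaping'' is empty. Your escaping case (ii) is essentially correct and matches Proposition~\ref{nontrappcomplete}. In short, you have the right limiting/rescaling idea---produce a limiting null direction and invoke non-trapping---but you have attached it to the wrong case: it belongs in (i), not (iii).
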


\subsection{Second result}

Next, we consider the exact Minkowski spacetime $(\re^{n+1},g_0)$ and the operator
\begin{align*}
P=\pa_t^2-\Delta_y.
\end{align*}
It is well-known that $P$ is essentially self-adjoint on $C_c^{\infty}(\re^{n+1})$. We denote the unique self-adjoint extension of $P$ by its same symbol. In \cite[Proposition 3.2]{T}, it is proved that the resolvent $(P-i)^{-1}$ has the following mapping property:
\begin{align}\label{locsm1}
(P-i)^{-1}:L^2(\re^{n+1})\to \jap{x}^{\frac{1}{2}+\e}H^{\frac{1}{2}}(\re^{n+1}),\quad \e>0.
\end{align}
This mapping property plays a crucial role for the proof of the limiting absorption principle in \cite{T} on asymptotically Minkowski spacetimes. On the other hand, the radial estimate (\cite[Propositions A.3, A.4]{DW}, \cite[Theorem A.3]{T}, \cite[(2), (3)]{Va1}) and the propagation of singularities imply
\begin{align}\label{locsm2}
(P-i)^{-1}:\jap{x}^{-\frac{1}{2}-\e}L^2(\re^{n+1})\to \jap{x}^{\frac{1}{2}+\e} H^1(\re^{n+1}),\quad \e>0.
\end{align}
Namely, $(\ref{locsm2})$ states that if $f$ has an additional decay (compared with the case $(\ref{locsm1})$), then its image $(P-i)^{-1}f$ gains a better regularity.
So a natural question is whether the Sobolev index $\frac{1}{2}$ in $(\ref{locsm1})$ is optimal or not. In this note, we give an affirmative answer to this question.

\begin{thm}\label{thmoptimal}
For the Minkowski spacetime $(\re^{n+1},g_0)$, for all $\e>0$, there exists $f\in L^2(\re^{n+1})$ such that $(P-i)^{-1}f\notin H^{\frac{1}{2}+\e}_{loc}(\re^{n+1})$.
\end{thm}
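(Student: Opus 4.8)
The plan is to construct $f$ by hand, prescribing its image $u=(P-i)^{-1}f$ on the Fourier side so that it concentrates in a shrinking tubular neighbourhood of the characteristic cone $\{|\xi|=|\tau|\}$ of $P$, and then to read off that $u$ is not locally $H^{\frac{1}{2}+\e}$. Let $\mathcal F$ denote the Fourier transform on $\re^{n+1}$. Since $P$ is essentially self-adjoint on $C_c^{\infty}(\re^{n+1})$, its self-adjoint realization is the Fourier multiplier with symbol $p(\tau,\xi)=|\xi|^2-\tau^2$, so $\dom(P)=\{u\in L^2:p\,\mathcal Fu\in L^2\}$ and $(P-i)^{-1}=\mathcal F^{-1}(p-i)^{-1}\mathcal F$. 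Hence it suffices to exhibit $u\in\dom(P)\setminus H^{\frac{1}{2}+\e}_{loc}(\re^{n+1})$ and to set $f:=(P-i)u$; then $f\in L^2(\re^{n+1})$ and automatically $(P-i)^{-1}f=u$.

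First I would take
\[
\mathcal Fu(\tau,\xi)=\psi\bigl(|\xi|^2-\tau^2\bigr)\,a(\xi),
\]
where $\psi\in C_c^{\infty}((-1,1))$ is nonnegative with $\psi\equiv 1$ near $0$, and $a\ge 0$ is a bounded function on $\re^n$ with $a(\xi)=|\xi|^{(1-n)/2}(\log|\xi|)^{-(1+\d)/2}$ for $|\xi|\ge 2$, for a fixed $\d>0$. As $\mathcal Fu$ is supported in $\{|p|<1\}$, one has $|(p-i)\mathcal Fu|\le\sqrt2\,|\mathcal Fu|$, so $u\in\dom(P)$ and $f\in L^2$ as soon as $u\in L^2$; and because the $\tau$-support of $\psi(|\xi|^2-\tau^2)$ has length $\lesssim\jap\xi^{-1}$, Plancherel gives $\|u\|_{L^2}^2\lesssim\int_{\re^n}\jap\xi^{-1}a(\xi)^2\,d\xi<\infty$, the convergence at infinity amounting to $\int^{\infty}r^{-1}(\log r)^{-1-\d}\,dr<\infty$. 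Thus $u\in\dom(P)$ and $f=(P-i)u\in L^2$.

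Next I would bound $\mathcal F(\chi u)$ from below for a suitable cutoff. Choose $\chi\in C_c^{\infty}(\re^{n+1})$ with $\chi(0)\ne 0$ and $\mathcal F\chi\ge 0$ (for instance $\chi=\eta*\widetilde\eta$ with $\widetilde\eta(x)=\overline{\eta(-x)}$, $\eta\in C_c^{\infty}$, $\int\eta\ne 0$, so that $\mathcal F\chi$ is a positive multiple of $|\mathcal F\eta|^2$); then $\mathcal F\chi(0)>0$, hence $\mathcal F\chi\ge c_0>0$ on a ball $\{|\zeta|\le r_0\}$. Since $\mathcal Fu\ge 0$, the function $\mathcal F(\chi u)$ is a positive multiple of the convolution $\mathcal F\chi*\mathcal Fu\ge 0$, so for every $\zeta=(\tau,\xi)$,
\[
\mathcal F(\chi u)(\zeta)\ \gtrsim\ \int_{|\zeta'-\zeta|\le r_0}\mathcal Fu(\zeta')\,d\zeta'.
\]
Using that $a$ is slowly varying and that the part of $\{\,|p|\le s_0\,\}$ (with $s_0>0$ such that $\psi\ge\tfrac12$ on $[-s_0,s_0]$) lying inside $B(\zeta,r_0)$ has measure $\gtrsim\jap\xi^{-1}$ whenever $\bigl|\,|\tau|-|\xi|\,\bigr|\le r_0/4$ and $|\xi|$ is large, this yields
\[
\mathcal F(\chi u)(\zeta)\ \gtrsim\ \jap\xi^{-1}a(\xi)\qquad\text{on}\quad \Gamma:=\bigl\{(\tau,\xi):|\xi|\ge R_0,\ \bigl|\,|\tau|-|\xi|\,\bigr|\le r_0/4\bigr\}
\]
for a large constant $R_0$. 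Since $\jap\zeta\approx\jap\xi$ on $\Gamma$ and the $\tau$-fibres of $\Gamma$ have length $\approx r_0$,
\[
\|\chi u\|_{H^{\frac{1}{2}+\e}}^2\ \gtrsim\ \int_{\Gamma}\jap\zeta^{1+2\e}\jap\xi^{-2}a(\xi)^2\,d\zeta\ \gtrsim\ \int_{|\xi|\ge R_0}\jap\xi^{2\e-1}a(\xi)^2\,d\xi\ \approx\ \int_{R_0}^{\infty}r^{2\e-1}(\log r)^{-1-\d}\,dr=\infty
\]
for every $\e>0$. Hence $\chi u\notin H^{\frac{1}{2}+\e}(\re^{n+1})$, so $u=(P-i)^{-1}f\notin H^{\frac{1}{2}+\e}_{loc}(\re^{n+1})$.

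The main obstacle is the passage from the (elementary) failure of \emph{global} $H^{\frac{1}{2}+\e}$-regularity to the \emph{local} statement. Localizing by $\chi$ convolves $\mathcal Fu$ with the Schwartz function $\mathcal F\chi$, which widens the $\tau$-support of $\mathcal Fu$ from $\sim\jap\xi^{-1}$ to $\sim 1$ while damping its amplitude by $\sim\jap\xi^{-1}$, i.e.\ a net loss of one power of $\jap\xi$ in the Sobolev bookkeeping. The profile $a(\xi)=|\xi|^{(1-n)/2}(\log|\xi|)^{-(1+\d)/2}$ is calibrated precisely so that the surviving integral still diverges for every $\e>0$ while converging at $\e=0$, consistent with the $H^{\frac{1}{2}}$ mapping property being attained. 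The one non-formal point is the lower bound for $\mathcal F(\chi u)$, where it is essential that $\mathcal F\chi\ge 0$ so as to preclude cancellation in the convolution; the remaining ingredients (the measure of the cone-tube inside a unit ball, the slow variation of $a$, and $\jap\zeta\approx\jap\xi$ on $\Gamma$) are routine.
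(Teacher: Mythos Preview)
Your argument is correct and is a genuinely different route from the paper's. The paper works on the physical side: it derives the explicit formula $(P-i)^{-1}f=\frac{i}{2}A^{-1}\int_{\re}e^{-i|t-s|A}f(s)\,ds$ with $A=\sqrt{-\Delta_y+i}$, builds an $f$ of the form $\mathcal F_{y\to\y}f(t,\y)=\jap{\y}^{-(n+1)/2-\e}\chi(t/\jap{\y})e^{ita_1(\y)}$, and then handles the spatial localization by conjugating $\f(y)$ through $e^{-itA_1}$ via Egorov's theorem together with an elliptic parametrix in the pseudodifferential calculus. Your construction instead prescribes $u=(P-i)^{-1}f$ directly on the full Fourier side as a nonnegative function supported in a $\jap{\x}^{-1}$-tube about the light cone, and replaces the Egorov/parametrix step by the positivity trick $\mathcal F\chi\ge0$, $\mathcal F u\ge0$, which turns the convolution lower bound into a pure measure estimate. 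This is more elementary---no explicit resolvent formula, no Weyl calculus, no Egorov---and as a bonus produces a single $f$ that witnesses $(P-i)^{-1}f\notin H^{1/2+\e}_{loc}$ simultaneously for every $\e>0$, whereas the paper's $f$ depends on $\e$. The paper's approach, on the other hand, stays closer to the propagation picture (the oscillatory factor $e^{ita_1}$ is exactly the null bicharacteristic flow) and makes the mechanism behind the $\tfrac12$-loss more transparent dynamically.
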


Usual real principal type estimates are used for proving existence of a local solution (\cite[Theorem 26.1.7]{Ho}). The estimates like $(\ref{locsm1})$ and $(\ref{locsm2})$ might be regarded as global versions of the real principal type estimates. Theorem $(\ref{thmoptimal})$ states that a global analoge of the usual real principal type estimates with Sobolev index $1$ does not hold if we impose $f\in L^2(\re^{n+1})$ only.
We note that Sobolev index $\frac{1}{2}$ in $(\ref{locsm1})$ is same as the local smoothing effects for time-dependent Schr\"odinger equations (see also \cite[\S 1.2]{T}).

\subsection{Organization and notation}

This paper is organized as follows: In Section \ref{sectiongeodesic}, the proof of the geodesically completeness is given. In Section \ref{sectionopti}, we show the optimality of the mapping property $(\ref{locsm1})$.
Moreover, in the Appendix, we give a short proof of $(\ref{locsm1})$ and $(\ref{locsm2})$ on the exact Minkowski spacetime.

We fix some notations. We denote $\jap{x}=(1+|x|^2)^{1/2}$ for $x\in \re^{n+1}$. We use the Sobolev spaces: $H^k(\re^{n+1})=\jap{D}^{-k}L^2(\re^{n+1})$ for $k\in \re$. We denote the Fourier transform on $\re^m$ by $\mathcal{F}f(y)=(2\pi)^{-\frac{m}{2}}\int_{\re^m}e^{-iy\cdot\y}f(y)dy$. For a Banach space $X$, we denote the norm of $X$ by $\|\cdot\|_{X}$. If $X$ is a Hilbert space, we write the inner metric of $X$ by $(\cdot, \cdot)_{X}$, where $(\cdot, \cdot)_{X}$ is linear with respect to the right variable. For a Lorentzian manifold $(M,g)$, we say that $v\in T_pM$ is a null vector if $g_p(v,v)=0$.

\section{Geodesically completeness of the Asymptotically Minkowski spacetimes}\label{sectiongeodesic}

\subsection{Completeness of trapped geodesics}

In this subsection, we discuss completeness of non-null trapped geodesics.
The following lemma is possibly well-known, however, the author cannot find a suitable reference. Hence we give a proof here. 

\begin{lem}\label{generalcompl}
Suppose a manifold $M$ satisfies the second axiom of countability. Suppose that a Lorentzian manifold $(M,g)$ is null non-trapping and $\c:(a,b)\to M$ is a maximally extended forward $($resp. backward$)$ trapped non-null geodesic, that is, $g(\c'(t),\c'(t))\neq 0$ and $\c([0,b))\subset K$ $($resp. $\c((a,0])\subset K$$)$ with a compact set $K\subset M$. Then $\c$ is forward complete $($resp. backward complete$)$ in the sense that $b=\infty$ $($resp. $a=\infty$$)$.
\end{lem}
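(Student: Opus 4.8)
The plan is to derive a contradiction from the assumption that a trapped non-null geodesic $\c:(a,b)\to M$ has $b<\infty$. The starting point is the standard fact that an \emph{inextendible} (maximally extended) geodesic with finite parameter interval cannot stay in a compact set: if $\c([0,b))\subset K$ with $K$ compact and $b<\infty$, then by compactness of the unit (pseudo-)sphere bundle over $K$ one can extract, along a sequence $t_j\uparrow b$, a convergent sequence $(\c(t_j),\c'(t_j))\to (p_\infty,v_\infty)\in TM$; solving the geodesic ODE with this initial data and using uniqueness and continuous dependence on initial conditions, one shows $\c$ extends past $b$, contradicting maximality. This is exactly the argument behind the classical statement ``a geodesic trapped in a compact set is complete,'' and the second-countability hypothesis is what lets us pass to subsequences and work with a countable exhausting family of charts. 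So the real content of the lemma is \emph{not} the trapped case per se but rather confirming that the hypotheses (non-null, plus the ambient null non-trapping condition) do not create any obstruction to running this argument.

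Concretely, I would proceed as follows. First I would fix $t_j\uparrow b$ and note $\c(t_j)\in K$, so after passing to a subsequence $\c(t_j)\to p_\infty\in K$. Next, the geodesic equation preserves $g(\c',\c')$, so $g(\c'(t),\c'(t))\equiv c$ for a constant $c\neq 0$ (here is where non-nullness enters: $c\neq 0$). The set $\Sigma_c=\{v\in TK\mid g(v,v)=c\}$ is a closed subset of the restriction to $K$ of any fixed auxiliary-Riemannian unit-multiple bundle; the key point is that $\Sigma_c$ with $c\neq 0$ is compact — this fails precisely for $c=0$ (null vectors can have unbounded auxiliary-norm going to the null cone at infinity is not an issue over compact $K$, but the relevant subtlety is that the ``mass shell'' $\{g(v,v)=c\}$ over a compact base is compact iff $c\neq0$ in the Lorentzian signature, since for $c=0$ the fibre is a cone which is still closed but the genuinely problematic behaviour is handled by the separate null non-trapping hypothesis). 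Hence $\c'(t_j)$, lying in the compact set $\Sigma_c$, has a convergent subsequence $\c'(t_j)\to v_\infty\in T_{p_\infty}M$. Then I invoke the existence/uniqueness theorem for the geodesic flow: there is $\e>0$ and a geodesic $\tilde\c:(-\e,\e)\to M$ with $\tilde\c(0)=p_\infty$, $\tilde\c'(0)=v_\infty$. Using continuous dependence of the geodesic flow on initial data, for $j$ large the geodesic through $(\c(t_j),\c'(t_j))$ is defined on a time interval of length at least $3\e/4$, say; since $t_j\to b$ we get $\c$ defined on $[0, b+\e/2)$ after gluing (by uniqueness the glued curve agrees with $\c$ on the overlap), contradicting maximality of $\c$.

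The role of the null non-trapping hypothesis in the statement deserves a remark in the proof: it is \emph{not} used in the trapped non-null case at all — the argument above goes through for any non-null geodesic trapped in a compact set on any Lorentzian manifold satisfying second countability. It is listed as a hypothesis only because the lemma is the ``trapped'' half of a dichotomy (the other half, non-trapped geodesics, is where null non-trapping is genuinely needed, via Definition~\ref{assnulltra}), and the author presumably wants a uniform hypothesis set across the section. I would state this explicitly to avoid the reader wondering where null non-trapping is invoked. One should also be mildly careful about the phrase ``maximally extended'': I would interpret it as inextendible as a geodesic, so that the gluing argument indeed produces a genuine contradiction rather than merely a reparametrization issue; the affine parameter on a geodesic is unique up to affine change, so extending the domain is a real extension.

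The main obstacle is essentially bookkeeping rather than a deep difficulty: making precise that ``$\c'(t_j)$ lives in a compact set'' — i.e., choosing the auxiliary Riemannian metric on a neighborhood of $K$ (or, since $K$ need not admit one globally compatibly, covering $K$ by finitely many charts using second countability/paracompactness and arguing chart-by-chart) and checking that the tangent vectors $\c'(t_j)$ do not escape to infinity in the fibre direction. This is where $g(\c',\c')=c\neq0$ does the work: on each chart, $g$ restricted to $K$ is a continuous family of nondegenerate symmetric forms, and $\{v : g_p(v,v)=c, p\in \overline{U}\}$ is compact for $c\neq 0$ when $\overline U$ is compact. Once that compactness is in hand, the rest is the textbook ``no geodesic trapped in a compact set is incomplete'' argument, which I would cite or reproduce in two lines.
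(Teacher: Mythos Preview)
Your argument has a genuine gap at the compactness step. The claim that the ``mass shell'' $\Sigma_c=\{v\in TK\mid g(v,v)=c\}$ is compact for $c\neq 0$ is false in Lorentzian signature: over each point $p\in K$ the fibre $\{v\in T_pM\mid g_p(v,v)=c\}$ is a hyperboloid, not a sphere, and hyperboloids are non-compact. (Think of Minkowski space: $-v_0^2+|v'|^2=c$ is unbounded for every $c\in\re$, including $c\neq 0$.) Consequently you cannot extract a convergent subsequence of $\c'(t_j)$ directly from $g(\c'(t_j),\c'(t_j))=c$, and the extension argument collapses. This is not a bookkeeping issue; the velocity can in principle run off to infinity along the hyperboloid while $g(\c',\c')$ stays fixed.

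This is exactly why the null non-trapping hypothesis is \emph{not} decorative here, contrary to your remark. The paper's proof fixes an auxiliary complete Riemannian metric $h$, normalises $v(t)=\c'(t)/h(\c'(t),\c'(t))^{1/2}$, and works in the genuinely compact set $SM|_K$. A subsequential limit $(q,v)$ then exists, but if $h(\c'(t_i),\c'(t_i))\to\infty$ one computes $g(v,v)=\lim c/h(\c'(t_i),\c'(t_i))=0$, so $v$ is null. The null non-trapping assumption is what rules this out: the null geodesic through $(q,v)$ must leave $K$, and continuous dependence of the flow forces the rescaled trajectories $\f_T(\c(t_i),v(t_i))$ to leave $SM|_K$ as well, contradicting the trapping of $\c$. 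Only after $v$ is shown to be non-null does one recover a finite bound on $h(\c'(t_i),\c'(t_i))$ and hence the standard extension contradiction. Your proposal skips precisely the step where the hypothesis does its work; without it the lemma is actually false (compact Lorentzian manifolds with incomplete non-null geodesics exist).
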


\begin{proof}
We consider the forward case only.
Since $M$ satisfies the second axiom of countability, there exists a complete Riemannian metric on $M$ (see \cite{NO}). We fix a complete Riemannian metric $h$ on $M$. Let
\begin{align*}
SM=\{(x,v)\in TM\mid h(v,v)=1 \}
\end{align*}
be the sphere bundle associated with $h$. Since $\c$ is trapped, there exists $b\in \re$ such that $\c([0,b))\subset K$ for a compact set $K\subset M$. 

Set 
\begin{align*}
v(t)=\frac{1}{h(\c'(t),\c'(t))^{\frac{1}{2}}}  \c'(t)\in S_{\c(t)}M.
\end{align*}
Let $\f_t$ be the geodesic flow on $TM$ and $\pi:TM\to M$ be the natural projection.
Since $(\c(t),v(t))\in SM|_K:=SM\cap \pi^{-1}(K)$ and $SM|_K$ is compact, there exist a sequence $\{t_i\}_i$ and $(q,v)\in SM|_K$ such that $t_i\to b$ and $(\c(t_i),v(t_i))\to (q,v)\in SM$ as $i\to \infty$.  We shall show that $v$ is not null. We suppose $v$ is a null vector.
Then the null non-trapping condition implies the existence of $T>0$ such that $\f_t(q,v)$ exists for $0\leq t\leq T$ and $\f_T(q,v)\notin SM|_K$. By the continuous dependence of initial datas of ODE, we have $\f_T(\c(t_i),v(t_i))\to \f_T(q,v)$ as $i\to \infty$. In particular, $\f_T(\c(t_i),v(t_i))\notin SM|_K$. This contradicts to 
\begin{align*}
\f_T(\c(t_i),v(t_i))=\f_{\frac{T}{h(\c'(t),\c'(t))^{\frac{1}{2}}}} (\c(t_i),\c'(t_i))\in SM|_K.
\end{align*}
Thus it follows that $v$ is a not null vector.

Suppose $\c$ is not forward complete. Then we have $b<\infty$ and 
\begin{align*}
0\neq g(v,v)=\lim_{i\to\infty} g(v(t_i),v(t_i))=\lim_{i\to \infty}\frac{g(\c'(t_i),\c'(t_i))}{h(\c'(t_i),\c'(t_i))}.
\end{align*}
Notice that $g(\c'(t_i),\c'(t_i))$ is non zero constant. Thus we have
\begin{align*}
\lim_{i\to\infty}h(\c'(t_i),\c'(t_i))=\frac{g(\c'(t_i),\c'(t_i))}{g(v,v)}<\infty
\end{align*}
Since $\c([0,b))\subset K$ and $h$ is complete Riemannian metric, it contradicts to $b<\infty$ and to \cite[Proposition 2.1]{G}.
\end{proof}

\subsection{Completeness of non-trapping orbits}

In the following of this section, we assume $g$ is an asymptotically Minkowski metric on $\re^{n+1}$. We set
$p(x,\x)=\frac{1}{2}\sum_{j,k=1}^ng^{ij}(x)\x_i\x_j$. Let $(y(t,x,\x),\y(t,x,\x))$ denote the solution to the Hamilton equation:
\begin{align}\label{Hameq}
\begin{cases}
\frac{d}{dt}y(t,x,\x)=\pa_{\x}p(y(t,x,\x),\y(t,x,\x)),\\
\frac{d}{dt}\y(t,x,\x)=-\pa_{x}p(y(t,x,\x),\y(t,x,\x)),
\end{cases}\quad  \begin{cases}
y(0,x,\x)=x,\\
\y(0,x,\x)=\x.
\end{cases}
\end{align}
It is well-known that $t\mapsto y(t,x,\x)$ is a geodesic on the Lorentzian manifold $(M,g)$ with the initial value $x$ and the initial velocity $g(x)^{-1}\x$.
The following lemma follows from a direct calculation. See \cite[Lemma A.1]{NT} for a proof.
\begin{lem}\label{clmourre}
There exist $M>0$ and $R_0>1$ such that 
\[
H_{p}^2(|x|^2)\geq M|\x|^{2}
\]
for any $(x,\x)\in \{(y,\y)\in T^*\re^{n}\,|\, |y|>R_0,\,|\y|\neq 0\}$. 
\end{lem}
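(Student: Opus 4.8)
The plan is to compute $H_p^2(|x|^2)$ by hand and read off its leading term as $|x|\to\infty$, which will be the pure Minkowski contribution $2|\x|^2$. Write $G(x)=(g^{jk}(x))_{j,k}$ for the inverse metric matrix and $G_0=(g_0^{jk})_{j,k}$ for the Minkowski one, so that $p(x,\x)=\tfrac12\,\x\cdot G(x)\x$. By the asymptotically Minkowski assumption the entries of $G(x)-G_0$ lie in $S^{-\m}(\re^{n+1})$, and hence the entries of $\pa_x G(x)=\pa_x\bigl(G(x)-G_0\bigr)$ lie in $S^{-\m-1}(\re^{n+1})$; in particular $G(x)=G_0+O(\jap{x}^{-\m})$ and $\pa_xG(x)=O(\jap{x}^{-\m-1})$. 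With $H_p=\sum_l\bigl(\pa_{\x_l}p\,\pa_{x_l}-\pa_{x_l}p\,\pa_{\x_l}\bigr)$ a direct computation gives
\[
H_p(|x|^2)=2\sum_l x_l(\pa_{\x_l}p)(x,\x)=2\,x\cdot G(x)\x .
\]

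Applying $H_p$ once more and separating off the terms in which no derivative falls on $G$, one obtains
\[
H_p^2(|x|^2)=2\,|G(x)\x|^2+\bigl(\text{remainder}\bigr),
\]
where $|\cdot|$ denotes the Euclidean norm and every remainder term carries at least one factor $\pa_{x_l}g^{jk}(x)=O(\jap{x}^{-\m-1})$ together with at most one factor $x_j=O(\jap{x})$ and two factors of $\x$; hence the remainder is $O(\jap{x}^{-\m})|\x|^2$. For the main term, $G(x)=G_0+O(\jap{x}^{-\m})$ yields $|G(x)\x|^2=|G_0\x|^2+O(\jap{x}^{-\m})|\x|^2$, and since $G_0=\mathrm{diag}(-1,1,\dots,1)$ is orthogonal we have $|G_0\x|=|\x|$. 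Therefore
\[
H_p^2(|x|^2)=2|\x|^2+O(\jap{x}^{-\m})|\x|^2 .
\]

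Finally one chooses $R_0>1$ large enough that the $O(\jap{x}^{-\m})|\x|^2$ remainder is bounded in absolute value by $|\x|^2$ whenever $|x|>R_0$ (possible since the implicit constant is uniform in $(x,\x)$). Then $H_p^2(|x|^2)\geq |\x|^2$ on $\{|x|>R_0,\ \x\neq0\}$, so the claim holds with $M=1$ (indeed any $M\in(0,2)$ after enlarging $R_0$). The argument is entirely routine; the only point needing care is to check that the factor of $x$ produced by differentiating $|x|^2$ does not destroy the decay of the remainder — it does not, because every such term is accompanied by at least one $x$-derivative of $g^{jk}$, so the net size is $O(\jap{x}^{-\m})$ and the positive Minkowski term $2|\x|^2$ dominates for $|x|$ large. (This is also \cite[Lemma A.1]{NT}.)
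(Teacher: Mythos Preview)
Your proof is correct and is precisely the ``direct calculation'' the paper alludes to; the paper itself does not give a proof but refers to \cite[Lemma A.1]{NT}, which you also cite. The computation $H_p^2(|x|^2)=2|G(x)\x|^2+O(\jap{x}^{-\m})|\x|^2=2|\x|^2+O(\jap{x}^{-\m})|\x|^2$ is exactly what is intended.
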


We also mention a result on an extension of solutions to the Hamilton equation.

\begin{lem}\label{solext}
Let $(x,\x)\in T^*\re^{n+1}$. Suppose that the solution $(y(t,x,\x),\y(t,x,\x))$ to $(\ref{Hameq})$ exists for a time interval $(T_0,T_1)$ and that there exists $C>0$ such that
\begin{align*}
|\y(t,x,\x)|\leq C\quad \text{for}\quad t\in (T_0,T_1).
\end{align*}
Then the solution $(y(t,x,\x),\y(t,x,\x))$ can be extended to a time interval beyond $(T_0,T_1)$.
\end{lem}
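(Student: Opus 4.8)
The plan is to deduce the lemma from the standard escape lemma for ordinary differential equations: if $(T_0^{\max},T_1^{\max})$ denotes the maximal interval of existence of the solution of $(\ref{Hameq})$ and $T_1^{\max}<\infty$, then for every compact $K\subset T^*\re^{n+1}$ there is $t<T_1^{\max}$ with $(y(t,x,\x),\y(t,x,\x))\notin K$. Hence it suffices to exhibit one compact subset of $T^*\re^{n+1}$ containing the whole trajectory over $(T_0,T_1)$; this forces $T_1\neq T_1^{\max}$ and, symmetrically, $T_0\neq T_0^{\max}$, i.e. the solution extends beyond $(T_0,T_1)$. We may assume $T_0,T_1\in\re$, since if $T_1=\infty$ (resp. $T_0=-\infty$) there is nothing to prove on that side.

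The only input needed is the global boundedness of the metric coefficients. Since $g$ is asymptotically Minkowski, $g^{jk}=g_0^{jk}+r^{jk}$ with $r^{jk}\in S^{-\m}(\re^{n+1})$ and $g_0^{jk}$ constant, so there is $C_1>0$ with $|g^{jk}(x)|\leq C_1$ and $|\pa_x g^{jk}(x)|=|\pa_x r^{jk}(x)|\leq C_1$ for all $x$. Because $p$ is quadratic in $\x$, the first equation in $(\ref{Hameq})$ gives $|\tfrac{d}{dt}y(t,x,\x)|=|\pa_\x p(y,\y)|\leq C_2|\y(t,x,\x)|\leq C_2 C$ on $(T_0,T_1)$, using the hypothesis $|\y(t,x,\x)|\leq C$. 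Integrating, $|y(t,x,\x)|\leq |x|+C_2 C(|T_0|+|T_1|)=:R$ for all $t\in(T_0,T_1)$. Therefore the trajectory stays in the compact set $\{(z,\z)\in T^*\re^{n+1}\mid |z|\leq R,\ |\z|\leq C\}$, which by the escape lemma proves the claim. If one prefers an explicit extension, note additionally that the second equation in $(\ref{Hameq})$ gives $|\tfrac{d}{dt}\y(t,x,\x)|=|\pa_x p(y,\y)|\leq C_3|\y(t,x,\x)|^2\leq C_3 C^2$, so both $y(\cdot,x,\x)$ and $\y(\cdot,x,\x)$ are Lipschitz on $(T_0,T_1)$ and hence converge to limits at $T_0$ and $T_1$; solving $(\ref{Hameq})$ locally from these two endpoints via Picard--Lindel\"of and gluing yields a solution on an open interval strictly containing $(T_0,T_1)$.

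There is no serious obstacle here; the one point worth emphasising is that, although the Hamiltonian $p$ itself is unbounded on $T^*\re^{n+1}$, the components of $H_p$ governing trapping are controlled — linearly and quadratically, respectively — by $|\y|$, and this control is \emph{uniform in $x$} precisely because of the asymptotically Minkowski condition. This is also the reason why a bound on $\y$ alone, rather than on the full pair $(y,\y)$, already suffices.
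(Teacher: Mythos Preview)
Your proof is correct and follows essentially the same route as the paper: use $|\pa_\x p(x,\x)|\leq C'|\x|$ together with the assumed bound on $\y$ to get a linear-in-$t$ bound on $|y|$, conclude that the trajectory remains in a fixed compact subset of $T^*\re^{n+1}$, and then invoke the standard ODE extension/escape principle. Your additional remarks (the explicit Lipschitz argument and the observation that uniformity in $x$ comes from the asymptotically Minkowski hypothesis) are elaborations of what the paper compresses into ``a standard theory of ODE gives our conclusion.''
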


\begin{proof}
We note $|\pa_{\x}p(x,\x)|\leq C'|\x|$ with a constant $C'>0$.
By the assumption and the Hamilton equation, we have $|y(t,x,\x)|\leq |x|+CC'|t|$. Thus $(y(t,x,\x),\y(t,x,\x))$ stays a fixed compact set for $t\in (T_0,T_1)$. A standard theory of ODE gives our conclusion.
\end{proof}

Now we show that every non-trapping orbits are complete on the asymptotically Minkowski spacetiems.
The following proposition is proved by a slight modification of the proof in \cite{NT}, which follows the strategy in \cite{KPRV}.

\begin{prop}\label{nontrappcomplete}
Fix $(x_0,\x_0)\in T^{*}\re^n$ with $\x_0\neq 0$ and suppose that $(x_0,\x_0)$ is forward $($resp. backward$)$ non-trapping in the sense that there exists $T\in (0,\infty]$ $($resp. $[-\infty.0)$$)$ such that 
\begin{align*}
\lim_{t\to T,\, t<T}|y(t,x_0,\x_0)|=\infty\quad  (\text{resp.}\,\, \lim_{t\to T,\, t>T}|y(t,x_0,\x_0)|=\infty)
\end{align*}
Then, there exist $C_1, C_2>0$ such that
\[
C_1\leq |\y(t, x_0, \x_0)|\leq C_2\quad \text{for}\quad 0\leq t< T\quad (\text{resp.} -T< t\leq 0) .
\]
Moreover, it follows that $|T|=\infty$ and that the orbit $(y(t,x_0,\x_0),\y(t,x_0,\x_0))$ is forward $($resp. backward$)$ complete.
\end{prop}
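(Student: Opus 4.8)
The plan is to follow the energy/virial strategy of \cite{NT,KPRV}. Consider the forward case, and suppose first that $0 \le t < T$. The first objective is the lower bound $|\y(t,x_0,\x_0)| \ge C_1 > 0$. Since $t\mapsto y(t,x_0,\x_0)$ is a null-or-timelike-or-spacelike geodesic, the quantity $p(y(t),\y(t))$ is conserved along the flow; write it as $E := p(x_0,\x_0)$. If $E \neq 0$, then from $2E = \sum g^{jk}(y(t))\y_j(t)\y_k(t)$ and the fact that $g^{jk}(y(t)) = g_0^{jk} + O(\jap{y(t)}^{-\m})$ with $|y(t)|\to\infty$, the leading term forces $|\y(t)|$ to stay bounded away from $0$. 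If $E=0$ (the null-geodesic case), the conservation argument alone does not give the lower bound, and here I would instead observe directly that $|\y(t)|$ cannot vanish: if $\y(t_\ast)=0$ for some $t_\ast$, then $(y(t_\ast),0)$ is a fixed point of the Hamilton flow (note $\pa_\x p$ and $\pa_x p$ both vanish when $\x=0$ since $p$ is quadratic in $\x$), contradicting $\x_0\neq 0$; combined with a quantitative estimate near $\x=0$ on a compact range of $y$, or with the continuity of $t\mapsto\y(t)$ on the closed-in-$y$-infinity behaviour, this yields a uniform positive lower bound.

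The main work is the upper bound $|\y(t,x_0,\x_0)| \le C_2$. Here I would use Lemma \ref{clmourre}: choose $R_0$ as there, and let $t_0$ be the last time before $T$ that $|y(t)| = R_0$ (such $t_0$ exists because $|y(t)|\to\infty$ as $t\to T$; before $t_0$ the orbit stays in the compact set $\{|y|\le R_0\}$ on a finite time interval, so $|\y|$ is automatically bounded there by ODE theory, and it suffices to bound $|\y|$ on $[t_0,T)$). On $[t_0,T)$ we have $|y(t)| \ge R_0$, and as long as $|\y(t)|\neq 0$ — which holds by the lower bound just established — Lemma \ref{clmourre} gives $H_p^2(|y(t)|^2) \ge M|\y(t)|^2 > 0$. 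Thus $\frac{d}{dt}\big(H_p(|y(t)|^2)\big) \ge M|\y(t)|^2 \ge 0$ on $[t_0,T)$, so $H_p(|y(t)|^2) = \frac{d}{dt}|y(t)|^2$ is non-decreasing. Since $|y(t)|^2$ is increasing to $\infty$, its derivative $\tfrac{d}{dt}|y(t)|^2 = H_p(|y(t)|^2)$ must eventually be positive, hence stays positive and bounded below by some $c>0$ for $t$ near $T$; but $H_p(|x|^2) = 2\sum g^{jk}(x)x_j\x_k$ is controlled by $C\jap{x}\,|\x|$, so combining with the monotonicity one gets, on the one hand, an upper bound on the growth rate of $|y(t)|^2$ of the form $C\jap{y}|\y|$, and on the other hand a lower bound $M|\y|^2$ on the growth rate of that growth rate; a Gronwall-type comparison then shows $|\y(t)|$ cannot blow up in finite time, i.e.\ $|\y(t)| \le C_2$ on $[t_0,T)$.

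Once $C_1 \le |\y(t,x_0,\x_0)| \le C_2$ on $[0,T)$ is established, the conclusion $T=\infty$ and forward completeness is immediate from Lemma \ref{solext}: the bound $|\y(t)|\le C_2$ on $(0,T)$ forces the solution to extend past $T$, so $T$ cannot be finite, and since $t\mapsto y(t,x_0,\x_0)$ is (up to affine reparametrisation) the geodesic with the given initial data, the geodesic is defined for all positive parameter values. The backward case is identical after reversing time. The step I expect to be the genuine obstacle is the upper bound on $|\y|$: making the heuristic "second derivative of $|y|^2$ dominates a positive multiple of $|\y|^2$, hence $|\y|$ cannot escape" into a clean Gronwall estimate requires care in tracking how $H_p(|y|^2)$ relates to $|\y|$ and in handling the transition region near $t_0$; the null case $E=0$ in the lower bound is a secondary subtlety that the cited references may phrase slightly differently since they assume null-completeness outright.
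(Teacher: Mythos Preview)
Your outline has the right scaffolding (use Lemma~\ref{clmourre} to get convexity of $|y(t)|^2$ once the orbit leaves a large ball, then feed the $|\eta|$-bound into Lemma~\ref{solext}), but the core step --- the upper bound on $|\eta(t)|$ --- is not closeable from the two inequalities you wrote down. Knowing only $\bigl|\tfrac{d}{dt}|y|^2\bigr|\le C\jap{y}\,|\eta|$ and $\tfrac{d^2}{dt^2}|y|^2\ge M|\eta|^2$ does not by itself prevent $|\eta(t)|$ from blowing up; what is missing is the one place where the asymptotically Minkowski hypothesis actually enters, namely
\[
|\partial_x p(x,\xi)|\le C_0\,|x|^{-1-\mu}|\xi|^2\qquad\text{for }|x|\ge 1,
\]
coming from $g^{jk}-g_0^{jk}\in S^{-\mu}$. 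This is the estimate that controls $|\eta'(t)|$, and the exponent $-1-\mu<-1$ is what makes the resulting quantity integrable in $t$ once you know $|y(t)|$ grows at least linearly. The paper uses it via a bootstrap: set $T_1=\sup\{s\ge t_0:\ |\eta(t)|\ge\tfrac12|\eta(t_0)|\}$; on $[t_0,T_1)$ Lemma~\ref{clmourre} gives $|y(t)|^2\ge R_1^2+c\,(t-t_0)^2$, hence $\bigl|\tfrac{d}{dt}|\eta(t)|^{-1}\bigr|\le C_0\bigl(R_1^2+c(t-t_0)^2\bigr)^{-(1+\mu)/2}$, whose integral over $[t_0,\infty)$ is $O(R_1^{-\mu})$. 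For $R_1$ large this forces $\tfrac23|\eta(t_0)|\le|\eta(t)|\le\tfrac43|\eta(t_0)|$ on $[t_0,T_1)$, so $T_1=T$ and both bounds follow at once.

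Note that this single bootstrap gives the lower and upper bounds simultaneously; your separate lower-bound argument via conservation of $E=p$ works for $E\neq 0$, but in the null case $E=0$ your sketch appeals to a ``quantitative estimate near $\xi=0$ on a compact range of $y$,'' which is unavailable here since $|y(t)|\to\infty$. The paper's bootstrap avoids this issue entirely.
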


\begin{proof}
We consider the forward case only.
We write $y(t)=y(t,x_0,\x_0)$ and $\y(t)=\y(t,x_0,\x_0)$.
Let $R_0$ be as in Lemma~\ref{clmourre}, and we let $R_1\geq R_0$ which is determined later.
We first note that by the forward non-trapping condition and Lemma \ref{clmourre}, there exits $0\leq t_0<T$ such that for $t_0\leq t<T$, we have
\begin{equation}\label{nontrap}
|y(t)|\geq R_1,\quad \frac{d}{dt}|y(t)|^2\geq 0.
\end{equation}
Indeed, it is easy to see that there are $0<s_0<t_0<T$ such that 
$\frac{d^2}{dt^2}|y(t)|^2>0$ for $t\geq s_0$, and $\frac{d}{dt}|y(t_0)|^2> 0$. 
Then for all $t_0\leq t<T$, the condition \eqref{nontrap} is satisfied. 

Take a constant $C_0>0$ such that
\begin{align*}
|\pa_xp(x,\x)|\leq C_0|x|^{-1-\m}|\x|^2\quad \text{for}\quad |x|\geq 1.
\end{align*}
We write $\y_0=|\y(t_0)|$ and $T_1:=\sup\{s\in [t_0,T)\mid  \frac{1}{2}\y_0\leq |\y(t)|\}$.
By Lemma \ref{clmourre} and $(\ref{nontrap})$, we have
\begin{align*}
|y(t)|^2\geq R_1^2+\frac{M\y_0^2}{8}(t-t_0)^2\quad \text{for}\quad t_0\leq t <T_1.
\end{align*}
Since $R_1\geq 1$, the Hamilton equation gives $|\y'(t)|\leq C_0|y(t)|^{-1-\m}|\y(t)|^2$ for $t_0\leq t <T_1$
and hence
\begin{align*}
|\frac{d}{dt}|\y(t)|^{-1}|\leq C_0\left(R_1^2+\frac{M\y_0^2}{8}(t-t_0)^2\right)^{-\frac{1+\m}{2}}\quad \text{for}\quad t_0\leq t <T_1.
\end{align*}
Thus, for $t_0\leq t <T_1$, we obtain
\begin{align*}
|\y_0^{-1}-|\y(t)|^{-1}|\leq& C_0\int_{t_0}^t\left(R_1^2+\frac{M\y_0^2}{8}(s-t_0)^2\right)^{-\frac{1+\m}{2}}ds\leq \frac{2\sqrt{2}C_0C_{\m}}{R_1^{\m}M^{\frac{1}{2}}}\y_0^{-1}
\end{align*}
where $C_{\m}=\int_0^{\infty}(1+s^2)^{-\frac{1+\m}{2}}ds$. Taking $R_1$ large enough, we have $\frac{2}{3}\y_0\leq |\y(t)|\leq \frac{4}{3}\y_0$ for $t_0\leq t<T_1$. By the definition of $T_1$, we obtain $T_1=T$. Moreover, by Lemma \ref{solext} and the inequality $\frac{2}{3}\y_0\leq |\y(t)|\leq \frac{4}{3}\y_0$ for $t_0\leq t<T=T_1$, we conclude $T=\infty$.

\end{proof}

\subsection{Proof of the first main result}

Now we prove our first main result.

\begin{proof}[Proof of Theorem \ref{thmcomplete}]
By Proposition \ref{nontrappcomplete}, any non-trapping geodesics are complete. Due to the null non-trapping condition, it suffices to prove that all trapped non-null geodesics are complete. The completeness of trapped non-null geodesics follow from Lemma \ref{generalcompl}. This completes the proof.
\end{proof}

\section{Optimality for the local smoothing estimate}\label{sectionopti}

In this section, we assume that $g=g_0$ is the Minkowski metric on $\re^{n+1}=\re_t\times \re^n_y$.

\subsection{An explicit formula of the resolvent $(P-i)^{-1}$}

We recall $P=\pa_t^2-\Delta_y$ on $\re^{n+1}$. First, we calculate an explicit formula of the resolvent $(P-i)^{-1}$. A similar formula also appears in the proof of \cite[Theorem C.1]{DW}. To do this, at first, we shall calculate the imaginary part of the symbol of $\sqrt{-\Delta_y-i}$. Set
\begin{align*}
A:=\sqrt{-\Delta_y+i}=A_1-iA_2\quad A_1,A_2:\text{self-adjoint}
\end{align*}
with $A_2\geq 0$.

\begin{lem}\label{imagcal}
For $\y\in \re^n$, set $a=\sqrt{|\y|^2-i}=re^{i\theta}$ with $\Im a\leq 0$. Then we have
\begin{align*}
r=(|\y|^4+1)^{\frac{1}{4}},\quad \Im a=-\frac{1}{2^{\frac{1}{2}}(\sqrt{|\y|^4+1}+|\y|^2)^{\frac{1}{2}}}.
\end{align*}
In particular, $C^{-1}\jap{\y}\leq|a|\leq C\jap{\y}$ and $C^{-1}\jap{\y}^{-1}\leq|\Im a|\leq C\jap{\y}^{-1}$ with a constant $C>0$.
\end{lem}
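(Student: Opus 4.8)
The plan is to reduce the whole statement to the elementary computation of the complex square root of $z:=|\y|^2-i$. First I would put $z$ in polar form: $z=|z|e^{i\f}$ with $|z|=(|\y|^4+1)^{1/2}$ and $\f=\arg z\in[-\tfrac{\pi}{2},0)$, the range being forced by $\Re z=|\y|^2\geq 0$ and $\Im z=-1<0$. Of the two square roots of $z$, the one with $\Im a\leq 0$ is $a=|z|^{1/2}e^{i\f/2}$, since $\f/2\in[-\tfrac{\pi}{4},0)$ makes $\sin(\f/2)<0$. This already gives $r=|a|=|z|^{1/2}=(|\y|^4+1)^{1/4}$, the first claimed identity.

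Next I would extract the imaginary part by writing $a=u-iv$ with $u=\Re a>0$ and $v=-\Im a>0$ (both strict, again because $\f/2\in[-\tfrac{\pi}{4},0)$). Comparing $a^2=u^2-v^2-2iuv$ with $a^2=z=|\y|^2-i$ gives $u^2-v^2=|\y|^2$ and $2uv=1$; combining the first of these with $u^2+v^2=|a|^2=(|\y|^4+1)^{1/2}$ yields
\begin{align*}
v^2&=\tfrac12\big((|\y|^4+1)^{1/2}-|\y|^2\big)\\
&=\frac{1}{2}\cdot\frac{(|\y|^4+1)-|\y|^4}{(|\y|^4+1)^{1/2}+|\y|^2}=\frac{1}{2\big((|\y|^4+1)^{1/2}+|\y|^2\big)},
\end{align*}
so that $\Im a=-v=-\dfrac{1}{2^{1/2}\big((|\y|^4+1)^{1/2}+|\y|^2\big)^{1/2}}$, which is the stated formula.

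Finally, the two-sided bounds follow from the elementary inequality $\tfrac12\jap{\y}^4\leq 1+|\y|^4\leq\jap{\y}^4$ (the right inequality is immediate and the left one reduces to $(|\y|^2-1)^2\geq 0$). Applying it to $|a|=(1+|\y|^4)^{1/4}$ gives $C^{-1}\jap{\y}\leq|a|\leq C\jap{\y}$, and since then $(|\y|^4+1)^{1/2}+|\y|^2$ lies between fixed constant multiples of $\jap{\y}^2$, the formula for $\Im a$ gives $C^{-1}\jap{\y}^{-1}\leq|\Im a|\leq C\jap{\y}^{-1}$. I do not expect any genuine obstacle in this lemma; the only two points needing a moment of care are the correct choice of branch of the square root — uniquely pinned down by the sign condition $\Im a\leq 0$ — and the algebraic simplification turning $\big((|\y|^4+1)^{1/2}-|\y|^2\big)^{1/2}$ into $\big((|\y|^4+1)^{1/2}+|\y|^2\big)^{-1/2}$ after multiplying numerator and denominator by the conjugate.
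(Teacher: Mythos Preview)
Your proof is correct and follows essentially the same approach as the paper: compute $r=|a|$ from $|a|^2=|a^2|$, then extract $\Im a$ from $a^2=|\y|^2-i$. The only cosmetic difference is that the paper works in polar form and uses the half-angle identity $\sin^2\theta=\tfrac12(1-\cos2\theta)$, whereas you compare real and imaginary parts in Cartesian form; you also spell out the elementary inequalities behind the ``In particular'' bounds, which the paper leaves implicit.
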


\begin{proof}
Since $a^2=|\y|^2-i$, we have $r^2=|a|^2=|a^2|=||\y|^2-i|=(|\y|^4+1)^{\frac{1}{2}}$, which implies $r=(|\y|^4+1)^{\frac{1}{4}}$. Moreover, it follows that $r^2\cos2\theta+ir^2\sin2\theta= r^2e^{2i\theta}=|\y|^2-i$ and  
\begin{align*}
\cos2\theta=\frac{|\y|^2}{(|\y|^4+1)^{\frac{1}{2}}},\quad \sin2\theta=\frac{-1}{(|\y|^4+1)^{\frac{1}{2}}},\quad \sin^2\theta=\frac{1}{2(|\y|^4+1)^{\frac{1}{2}}((|\y|^4+1)^{\frac{1}{2}}+|\y|^2 ) }.
\end{align*}
This calculation with $\Im a\leq 0$ implies
\begin{align*}
\Im a=r\sin\theta=-\frac{1}{2^{\frac{1}{2}}((|\y|^4+1)^{\frac{1}{2}}+|\y|^2 )^{\frac{1}{2}} }.
\end{align*}

\end{proof}

We write
\begin{align*}
a_1=\Re a,\quad a_2=-\Im a,\quad A_1=a_1(D_{\y}),\quad A_2=a_2(D_{\y}). 
\end{align*}
Since $C^{-1}\jap{\y}^{-1}\leq|\Im a|\leq C\jap{\y}^{-1}$, we have
\begin{align}\label{Soboleveq}
C^{-1}\|u\|_{H^{k}(\re^{n}_y)}\leq \|A_2^{-k}u\|_{L^2(\re^{n}_y)} \leq C\|u\|_{H^{k}(\re^{n}_y)}
\end{align}
for all $k\in \re$ with a constant $C>0$. 

Now we calculate an expression of the resolvent $(P-i)^{-1}$.

\begin{prop}
For $f\in L^2(\re^{n+1})$, we have
\begin{align}
(P-i)^{-1}f(t,y)=&\frac{i}{2}A^{-1}\int_{-\infty}^te^{-i(t-s)A}f(s,y)ds-\frac{i}{2}A^{-1}\int_{+\infty}^te^{i(t-s)A}f(s,y)ds\nonumber\\
=&\frac{i}{2}A^{-1}\int_{-\infty}^{\infty}e^{-i|t-s|A}f(s,y)ds.\label{resformula}
\end{align}

\end{prop}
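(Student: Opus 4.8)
The plan is to derive the formula by working on the Fourier side in the $y$-variable and treating $P-i = \partial_t^2 + (-\Delta_y - i) = \partial_t^2 + A^2$ as a second-order ODE in $t$ with operator coefficients. After conjugating by $\mathcal{F}_y$, for each fixed $\eta$ we must invert $\partial_t^2 + a(\eta)^2$ on $L^2(\re_t)$, where $a=a(\eta)=\sqrt{|\eta|^2-i}$ has $\Im a<0$ by Lemma~\ref{imagcal}. The scalar Green's function for $\partial_t^2 + a^2$ that is bounded (indeed $L^2$) in $t$ is the standard one: since $\Im a<0$, the functions $e^{-itA}$ decays as $t\to -\infty$ and $e^{itA}$ decays as $t\to +\infty$, so one forms
\[
G(t,s)=\frac{i}{2a}\Big(e^{-i(t-s)a}\mathbf{1}_{\{s<t\}} + e^{i(t-s)a}\mathbf{1}_{\{s>t\}}\Big)=\frac{i}{2a}e^{-i|t-s|a}.
\]
First I would verify directly that $(\partial_t^2+a^2)G(t,s)=\delta(t-s)$ in $t$: $G$ is continuous across $t=s$, solves the homogeneous equation away from $t=s$, and has a unit jump in $\partial_t G$ at $t=s$ (the jump is $\frac{i}{2a}(-ia) - \frac{i}{2a}(ia) = 1$), which is exactly the distributional identity needed.

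Next I would promote this to the operator statement. For $f\in L^2(\re^{n+1})$, set
\[
u(t,y)=\frac{i}{2}A^{-1}\int_{-\infty}^\infty e^{-i|t-s|A}f(s,y)\,ds,
\]
interpreted via $\mathcal{F}_y$ as the multiplication/convolution operator with kernel $G(t,s)$ in the $t$-variable, fibered over $\eta$. The key estimates are $|a(\eta)|\sim\jap{\eta}$ and $|\Im a(\eta)|\sim\jap{\eta}^{-1}$ from Lemma~\ref{imagcal}: the former makes $A^{-1}$ bounded, and together they give, by Young's inequality in $t$ applied fiberwise, the bound $\|\hat u(\cdot,\eta)\|_{L^2(\re_t)}\le \frac{1}{2|a(\eta)|}\|e^{-i|\cdot|a(\eta)|}\|_{L^1(\re_t)}\|\hat f(\cdot,\eta)\|_{L^2(\re_t)} = \frac{1}{2|a(\eta)|\,|\Im a(\eta)|}\|\hat f(\cdot,\eta)\|_{L^2(\re_t)}\lesssim \|\hat f(\cdot,\eta)\|_{L^2(\re_t)}$, uniformly in $\eta$. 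Integrating in $\eta$ shows $u\in L^2(\re^{n+1})$, so the right-hand side of (\ref{resformula}) defines a bounded operator on $L^2$. Then, applying $P-i=\partial_t^2+A^2$ and using the scalar identity above fiberwise, one gets $(P-i)u=f$; since $P-i$ is injective on its domain (as $i\notin\sigma(P)$, $P$ being self-adjoint), this identifies $u=(P-i)^{-1}f$. The equality of the two displayed lines in (\ref{resformula}) is just splitting $\int_{-\infty}^\infty = \int_{-\infty}^t + \int_t^{\infty}$ and noting the sign of $t-s$ in each piece.

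The main technical obstacle is justifying the manipulations rigorously rather than formally: one must check that the fiberwise integral converges in $L^2(\re_t)$ for a.e.\ $\eta$ (handled by the Young estimate above, once one notes it for $f$ in a dense class such as $C_c^\infty$ and then extends by the uniform bound), that $u$ indeed lies in the operator domain $\dom(P)=\{u\in L^2: \partial_t^2 u - \Delta_y u\in L^2\}$ (which again follows since $\widehat{(P-i)u}(\cdot,\eta)=\hat f(\cdot,\eta)$ fiberwise, so $\|(\partial_t^2-\Delta_y)u\|_{L^2}\le \|f\|_{L^2}+\|u\|_{L^2}<\infty$), and that differentiating under the integral sign in $t$ is legitimate — most cleanly done by first establishing the formula for $f\in C_c^\infty(\re^{n+1})$, where everything is classical, and then passing to the limit using density and the boundedness of both sides. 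I would organize the write-up in exactly that order: scalar Green's function, fiberwise $L^2$ bound, verification on $C_c^\infty$, density argument.
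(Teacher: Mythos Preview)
Your proposal is correct and follows essentially the same route as the paper: define the right-hand side as an operator $R$, verify $(P-i)Rf=f$ for $f\in C_c^\infty$ by the scalar Green's function computation, and prove $L^2$-boundedness of $R$ via the fiberwise Young inequality in $t$ using $|a|^{-1}|\Im a|^{-1}\lesssim 1$ from Lemma~\ref{imagcal}. The paper performs the Young estimate after a scaling $s\mapsto s/a_2$ rather than computing $\|e^{-|\cdot|a_2}\|_{L^1}$ directly, but the resulting bound and logic are identical; your write-up is in fact more explicit about the Green's function derivation and the domain check than the paper's ``direct calculation''. (Minor slip: with $\Im a<0$ one has $|e^{-ita}|=e^{t\Im a}$, so $e^{-ita}$ decays as $t\to+\infty$, not $-\infty$; your formula for $G$ is nonetheless the correct one.)
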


\begin{proof}
Set $Rf(t,y)=\frac{i}{2}A^{-1}\int_{-\infty}^{\infty}e^{-i|t-s|A}f(s,y)ds$. Since $P$ is essentially self-adjoint on $C_c^{\infty}(\re^{n+1})$, it suffices to prove that $(P-i)Rf=f$ for $f\in C_c^{\infty}(\re^{n+1})$ and that $R$ is a bounded operator on $L^2(\re^{n+1})$. The identity  $(P-i)Rf=f$ follows from a direct calculation. Hence we shall prove that $R$ is bounded on $L^2(\re^{n+1})$. Set $a_2(\y)=-\Im a$. Using the Fourier transform in the $y$-variable and a scaling, we have
\begin{align*}
\|Rf\|_{L^2(\re^{n+1}_{t,y})}=&\|\mathcal{F}_{y\to \y}Rf\|_{L^2(\re^{n+1}_{t,\y})}\leq \frac{1}{2}\|a^{-1}\int_{\re}e^{-|t-s|a_2}|\mathcal{F}_{y\to \y}f(s,\y)|ds\|_{L^2(\re^{n+1}_{t,\y})}\\
\leq&\frac{1}{2}\|a_2^{-\frac{3}{2}}a^{-1}\int_{\re}e^{-|t-s|}|\mathcal{F}_{y\to \y}f(\frac{s}{a_2},\y)|ds\|_{L^2(\re^{n+1}_{t,\y})}\\
\leq&\frac{1}{2}\|e^{-|t|}\|_{L^1(\re_t)} \|a_2^{-\frac{3}{2}}a^{-1}\mathcal{F}_{y\to \y}f(\frac{t}{a_2},\y)\|_{L^2(\re^{n+1}_{t,\y})}\\
=&\frac{1}{2}\|e^{-|t|}\|_{L^1(\re_t)} \|a_2^{-1}a^{-1}\mathcal{F}_{y\to \y}f(t,\y)\|_{L^2(\re^{n+1}_{t,\y})}\leq C\|f\|_{L^2(\re^{n+1})}.
\end{align*}
Here we use the Young inequality in the third line and set 
\begin{align*}
C=\frac{\|e^{-|t|}\|_{L^1(\re_t)}}{2}\sup_{\y\in \re^n}|a_2(\y)^{-1}a(\y)^{-1}|,
\end{align*}
which is finite by virtue of Lemma \ref{imagcal}. This completes the proof.
\end{proof}

\subsection{Proof of the second result}

In this subsection, we shall prove Theorem \ref{thmoptimal}. To do this, it suffices to find $f\in L^2(\re^{n+1})$ such that
\begin{align*}
\jap{D_{\y}}^{\frac{1}{2}+\e}u  \notin L^2_{loc}(\re^{n+1}).
\end{align*}
where we set $u:=(P-i)^{-1}f$ and $\y\in \re^n$ is the dual variable of $y$.

Let $\chi\in C_c^{\infty}((\frac{1}{4},1):[0,1])$ and $\g\in C_c^{\infty}(\re^n;[0,1])$ satisfying $\|\chi\|_{L^2(\re)}=\|\g\|_{L^2(\re^n)}=1$ and
\begin{align*}
\chi(t)=1\quad  \text{on}\quad \frac{1}{2}\leq t\leq 1,\quad \g(\y)=1\quad \text{on}\quad \frac{1}{2}\leq |\y|\leq 1.
\end{align*}
We denote the Fourier transform from the variable $\y$ to the variable $y$ by $\mathcal{F}^{-1}_{\y\to y}$ and set
\begin{align*}
g(t,\y):=\jap{\y}^{-\frac{n+1}{2}-\e}\chi(\frac{t}{\jap{\y}})e^{ita_1(\y)}\in L^2(\re^{n+1}),\quad f(t,y)= \mathcal{F}^{-1}_{\y\to y}g(t,y), \quad u:=(P-i)^{-1}f.
\end{align*}
We shall show that
\begin{align}\label{optfunction}
\f(y)\jap{D_{\y}}^{\frac{1}{2}+\e}u\notin  L^2((0,\frac{1}{4})_t\times \re^n_y).
\end{align} 
with $\f\in C_c^{\infty}(\re^n_y)$ which is determined as follows: Let $M>0$ be a constant satisfying $|\pa_{\y}a_1(\y)|\leq M$ for all $\y\in \re^n$. Take $\f,\f_1\in C_c^{\infty}(\re^n;[0,1])$ such that $\f_1(y)=\f_1(-y)$ and
\begin{align*}
\f(y)=1\quad \text{on}\quad |y|\leq M,\quad \f_1(y)=1\quad \text{on}\quad |y|\leq \frac{M}{8},\quad \supp \f_1\subset \{|y|\leq \frac{M}{4}\}.
\end{align*}
As is seen below, $e^{-itA_1}u$ has no oscillation terms and is easy to handle. In order to remove the oscillation factor $e^{itA_1}$, we will use Egorov's theorem and an elliptic estimate in the microlocal analysis. Precisely, we shall prove
\begin{align}
\f_1\jap{D_{\y}}^{\frac{1}{2}+\e}e^{-itA_1}u&\notin L^2((0,\frac{1}{4})_t\times \re^n_y),\label{red1}\\
\|\f_1\jap{D_{\y}}^{\frac{1}{2}+\e}e^{-itA_1}u\|_{ L^2((0,\frac{1}{4})_t\times \re^n_y)} &\leq C\|\f\jap{D_{\y}}^{\frac{1}{2}+\e}u\|_{ L^2((0,\frac{1}{4})_t\times \re^n_y)}+C\|u\|_{L^2(\re^{n+1})}\label{red2}
\end{align}
which ensure $(\ref{optfunction})$. Thus it remains to prove $(\ref{red1})$ and $(\ref{red2})$.

\subsection*{Proof of $(\ref{red1})$}

First, we deal with $(\ref{red1})$.
 Since $t-s\leq 0$ for $t\leq \frac{1}{4}$ and $s\in \supp \chi(\frac{\cdot}{\jap{\y}})$, it follows from  $(\ref{resformula})$ that for $t\leq \frac{1}{4}$, we have
\begin{align*}
\mathcal{F}_{y\to \y}(\jap{D_{\y}}^{\frac{1}{2}+\e}e^{-itA_1}u)(t,\y)=&\frac{i}{2}\jap{\y}^{-\frac{n}{2}} a(\y)^{-1}\int_t^{+\infty}e^{-|t-s|a_2(\y)}\chi(\frac{s}{\jap{\y}})ds\\
=&\frac{i}{2}\jap{\y}^{-\frac{n}{2}+1} a(\y)^{-1}\int_{\jap{\y}^{-1}t}^{+\infty}e^{-|\jap{\y}^{-1}t-s|\jap{\y}a_2(\y)}\chi(s)ds=:b_t(t,\y).
\end{align*}
Thus we have $\f_1(y)(\jap{D_{\y}}^{\frac{1}{2}+\e}e^{-itA_1}u)(t,y)=\f_1(y)(\mathcal{F}_{\y\to y}^{-1}b_t)(y)$. By Lemma \ref{imagcal}, $b_t$ satisfies
\begin{align}\label{b_tes}
|\pa_{\y}^{\a}b_t(\y)|\leq C_{\a}\jap{\y}^{-\frac{n}{2}},\quad C^{-1}\jap{\y}^{-\frac{n}{2}}\leq |b_t(\y)|\leq C\jap{\y}^{-\frac{n}{2}}
\end{align}
uniformly in $t\in (0,\frac{1}{4})$.

\begin{lem}
Let $b_t\in C^{\infty}(\re^n)$ satisfying $(\ref{b_tes})$ uniformly in $t$.
Then we have $\f_1\mathcal{F}_{\y\to y}^{-1}b_t\notin L^2(\re^n_y)$ for each $t$.
\end{lem}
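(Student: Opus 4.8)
The plan is to argue by contradiction. Fix $t$ and suppose $\f_1\mathcal{F}_{\y\to y}^{-1}b_t\in L^2(\re^n_y)$. Since $\f_1$ is smooth and compactly supported, $\f_1\mathcal{F}_{\y\to y}^{-1}b_t$ is a compactly supported distribution, so by Plancherel its Fourier transform lies in $L^2$ as well; writing the transform of a product as a convolution, this reads $\widehat{\f_1}*b_t\in L^2(\re^n_\y)$, where $\widehat{\f_1}=\mathcal{F}_{y\to\y}\f_1\in\mathcal{S}(\re^n)$ and $\int_{\re^n}\widehat{\f_1}(\z)\,d\z=(2\pi)^{n/2}\f_1(0)=(2\pi)^{n/2}\neq 0$ because $\f_1\equiv 1$ near the origin. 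On the other hand, the lower bound in $(\ref{b_tes})$ gives $\|b_t\|_{L^2(\re^n)}^2\geq C^{-2}\int_{\re^n}\jap{\y}^{-n}\,d\y=\infty$, so $b_t\notin L^2$. Hence it suffices to show
\begin{align*}
\widehat{\f_1}*b_t-(2\pi)^{n/2}b_t\in L^2(\re^n_\y),
\end{align*}
since this would force $b_t\in L^2$, a contradiction.

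To prove the displayed membership, write
\begin{align*}
(\widehat{\f_1}*b_t)(\y)-(2\pi)^{n/2}b_t(\y)=\int_{\re^n}\widehat{\f_1}(\z)\bigl(b_t(\y-\z)-b_t(\y)\bigr)\,d\z
\end{align*}
and split the integral at $|\z|=\jap{\y}/2$. On $\{|\z|\leq\jap{\y}/2\}$ one has $\jap{\y-s\z}\asymp\jap{\y}$ for $0\leq s\leq1$, so by the mean value theorem and the derivative bound $|\pa_\y b_t(\y')|\leq C\jap{\y'}^{-n/2-1}$ we get $|b_t(\y-\z)-b_t(\y)|\leq C|\z|\jap{\y}^{-n/2-1}$; integrating against $|\z\,\widehat{\f_1}(\z)|\in L^1(\re^n)$ bounds this part by $C\jap{\y}^{-n/2-1}$. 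On $\{|\z|>\jap{\y}/2\}$ we use $|b_t(\y-\z)-b_t(\y)|\leq C$ together with the rapid decay $|\widehat{\f_1}(\z)|\leq C_N\jap{\z}^{-N}$, so the contribution is $\leq C_N\int_{|\z|>\jap{\y}/2}\jap{\z}^{-N}\,d\z\leq C_N\jap{\y}^{n-N}$, which for $N$ large is $O(\jap{\y}^{-n/2-1})$. Since $\jap{\y}^{-n/2-1}\in L^2(\re^n)$, the claim follows and the contradiction is complete.

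The only delicate point is the near-region estimate, which requires that $\pa_\y b_t$ gains one power of $\jap{\y}^{-1}$ relative to $b_t$; this is slightly stronger than the first inequality in $(\ref{b_tes})$ taken literally, but it is immediate from the explicit formula for $b_t$ displayed just before $(\ref{b_tes})$, since each of the factors $\jap{\y}^{-n/2+1}$, $a(\y)^{-1}$, and $\int e^{-(\jap{\y}s-t)a_2(\y)}\chi(s)\,ds$ is an elliptic symbol whose $\y$-derivatives gain a factor $\jap{\y}^{-1}$ (using Lemma~\ref{imagcal} for the bounds on $a$ and $a_2$). Equivalently, one may phrase the whole argument operator-theoretically: $b_t(D_y)=\jap{D_y}^{-n/2}Q$ with $Q$ an elliptic Fourier multiplier of order $0$, so $\mathcal{F}_{\y\to y}^{-1}b_t=Q\bigl(\mathcal{F}_{\y\to y}^{-1}\jap{\y}^{-n/2}\bigr)$, and since the Bessel kernel $\mathcal{F}_{\y\to y}^{-1}\jap{\y}^{-n/2}$ is comparable to $|y|^{-n/2}\notin L^2_{loc}$ near the origin, ellipticity of $Q$ yields $\mathcal{F}_{\y\to y}^{-1}b_t\notin L^2_{loc}$ near $0$ and hence $\f_1\mathcal{F}_{\y\to y}^{-1}b_t\notin L^2$.
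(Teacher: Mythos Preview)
Your argument is correct. The paper's proof and yours share the same skeleton---$b_t\notin L^2$ by the lower bound, so one must show that multiplying $\mathcal{F}^{-1}b_t$ by $\f_1$ cannot land in $L^2$---but implement it on opposite sides of the Fourier transform. The paper works in physical space: integration by parts shows $\mathcal{F}^{-1}_{\y\to y}b_t$ is smooth and rapidly decreasing away from $y=0$, hence $(1-\f_1)\mathcal{F}^{-1}b_t\in L^2$, and one concludes since $\mathcal{F}^{-1}b_t\notin L^2$. You work on the Fourier side, showing $\widehat{\f_1}*b_t-(2\pi)^{n/2}b_t\in L^2$ by a near/far decomposition of the convolution. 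These are dual versions of the same mechanism; the paper's route is a couple of lines shorter, while yours is more explicit and quantitative.

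Your final paragraph is a useful observation: both arguments actually need symbol-type decay $|\pa_\y^{\a}b_t|\lesssim\jap{\y}^{-n/2-|\a|}$ (at least for $|\a|=1$ in your version, and for large $|\a|$ to make the integration by parts converge in the paper's version), which is stronger than the literal first inequality in $(\ref{b_tes})$. You are right that this follows immediately from the explicit formula for $b_t$ together with Lemma~\ref{imagcal}, so there is no real gap; but it is good that you flagged the point rather than glossed over it.
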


\begin{proof}
By the Plancherel theorem, we have $\mathcal{F}_{\y\to y}^{-1}b_t\notin L^2(\re^n_y)$. Thus it suffice to prove $(1-\f_1)\mathcal{F}_{\y\to y}^{-1}b_t\in L^2(\re^n_y)$. By integrating by parts, it turns out that $\mathcal{F}_{\y\to y}^{-1}b_t(y)$ is rapidly decreasing and smooth away from $y=0$. Since $\f_1(y)=1$ near $y=0$, then we obtain $(1-\f_1)\mathcal{F}_{\y\to y}^{-1}b_t\in L^2(\re^n_y)$.
\end{proof}

Now we suppose $\f_1(y)\jap{D_{\y}}^{\frac{1}{2}+\e}e^{-itA_1}u\in  L^2((0,\frac{1}{4})_t\times \re^n_y)$. Then we have 
\begin{align*}
\f_1(y)\jap{D_{\y}}^{\frac{1}{2}+\e}e^{-itA_1}u\in  L^2(\re^n_y)
\end{align*}
for almost everywhere $t\in (0,\frac{1}{4})$. This contradicts to the lemma above. We complete the proof of $(\ref{red1})$.

\subsection*{Proof of $(\ref{red2})$}

Next, we prove $(\ref{red2})$.
We briefly recall the notion of pseudodifferential operators. For precise treatment, see \cite[\S 18]{Ho}. We define the Weyl quantization of a symbol $a$ by
\begin{align*}
\mathrm{Op}(a)u(y)=\frac{1}{(2\pi)^{n}}\int_{\re^{2n}}e^{i(y-y')\cdot \y}a(\frac{y+y'}{2},\x)u(y')dy'd\y.
\end{align*}
We note that if $a$ is real-valued, then $\Op(a)$ is formally self-adjoint and if $a\in S^0$, then $\Op(a)$ is bounded in $L^2(\re^n)$.
For $k\in \re$ and $a\in C^{\infty}(\re^{2n})$, we call $a\in S^{k}$ if
\begin{align*}
|\pa_{y}^{\a}\pa_{\y}^{\b}a(x,\x)|\leq C_{\a\b}\jap{\y}^{k-|\b|}.
\end{align*}
We define $\{a,b\}:=H_ab:=\pa_{\y}a\cdot \pa_{y}b-\pa_{y}a\cdot \pa_{\y}b$.
For $a\in S^{k_1}$ and $b\in S^{k_2}$, we have
\begin{align}\label{compfor}
\Op(a)\Op(b)=\Op(ab)+\Op S^{k_1+k_2-1},\,\, [\Op(a),i\Op(b)]=\Op(\{a,b\})+\Op S^{k_1+k_2-2}
\end{align}
We need the following lemmas.

\begin{lem}[Egorov's theorem]\label{Ego}
Set $a_t(y,\y):=\f(y-t\pa_{\y}a_1(\y))$. Then
\begin{align*}
e^{-itA_1}\f(y)e^{itA_1}+R_t=\Op(a_t),
\end{align*}
where $R_t\in B(H^{-1}(\re^n), L^2(\re^n))$ is an operator locally uniformly bounded in $t$.

\end{lem}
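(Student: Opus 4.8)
The plan is to prove Egorov's theorem in the version stated, namely that conjugating the multiplication operator $\f(y)$ by the unitary group $e^{itA_1}$ produces, up to a remainder that is bounded from $H^{-1}$ to $L^2$ uniformly on the compact time interval, the Weyl quantization of the symbol $a_t(y,\y)=\f(y-t\pa_{\y}a_1(\y))$ obtained by transporting $\f$ along the Hamilton flow of the principal symbol $a_1(\y)$. First I would set $B_t:=e^{-itA_1}\Op(\f)e^{itA_1}$ and note that it solves the Heisenberg equation $\tfrac{d}{dt}B_t=e^{-itA_1}\,[iA_1,\Op(\f)]\,e^{itA_1}$ with $B_0=\Op(\f)$. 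On the other hand, the candidate $\Op(a_t)$ satisfies $\tfrac{d}{dt}\Op(a_t)=\Op(\pa_t a_t)=\Op(\{a_1,a_t\})$ because $a_t$ is by construction constant along the flow generated by $a_1$, i.e. $\pa_t a_t+\{a_1,a_t\}=0$ with the sign convention $\{a,b\}=\pa_\y a\cdot\pa_y b-\pa_y a\cdot\pa_\y b$ fixed in the excerpt (one should double-check the sign here, since this is exactly where an error would creep in). The composition formula $(\ref{compfor})$ gives $[\Op(a_1),i\Op(\f)]=\Op(\{a_1,\f\})+\Op S^{-2}$; since $a_1\in S^1$ and $\f\in S^{-\infty}$ (it is compactly supported in $y$ and independent of $\y$), actually $\{a_1,\f\}=\pa_\y a_1\cdot\pa_y\f\in S^{-\infty}$ as well and the error term is in $\Op S^{-\infty}$, hence bounded $H^{-1}\to L^2$. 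Note that $A_1=a_1(D_\y)$ is a Fourier multiplier, so its full symbol is exactly $a_1(\y)$ with no lower-order corrections, which simplifies the composition formula: $[\Op(a_1),i\Op(\f)]=\Op(\{a_1,\f\}+r)$ where $r\in S^{-2}$ comes purely from the second-order term in the symbol expansion.

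Next I would run the standard Duhamel/comparison argument. Write $D_t:=B_t-\Op(a_t)$; then $D_0=0$ and
\begin{align*}
\tfrac{d}{dt}D_t=e^{-itA_1}\,[iA_1,\Op(\f)]\,e^{itA_1}-\Op(\{a_1,a_t\}).
\end{align*}
Insert $[iA_1,\Op(\f)]=\Op(\{a_1,\f\})+E$ with $E\in\Op S^{-2}$, and more generally at time $t$ one wants $e^{-itA_1}\Op(\{a_1,a_t\})e^{itA_1}=\Op(\{a_1,a_t\})+(\text{bounded error})$; but cleaner is to differentiate $e^{-itA_1}\Op(a_t)e^{itA_1}$ directly: its derivative is $e^{-itA_1}\big([iA_1,\Op(a_t)]+\Op(\pa_t a_t)\big)e^{itA_1}=e^{-itA_1}\big(\Op(\{a_1,a_t\})+\Op S^{k-2}+\Op(-\{a_1,a_t\})\big)e^{itA_1}=e^{-itA_1}\Op(S^{k-2})e^{itA_1}$, where $k$ is the order of $a_t$ (here $k=-\infty$ since $a_t\in S^{-\infty}$, uniformly in $t$ on the compact interval, as the flow $\y\mapsto y-t\pa_\y a_1(\y)$ is a smooth perturbation and $\f$ has bounded derivatives of all orders). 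Therefore $\tfrac{d}{dt}\big(e^{-itA_1}\Op(a_t)e^{itA_1}\big)$ is, after conjugating back, an operator in $\Op S^{-\infty}$ uniformly in $t$; integrating from $0$ to $t$ and using that $e^{\pm itA_1}$ is unitary on every $L^2$ and bounded on every $H^s$ (again because it is a Fourier multiplier of modulus one), we get $e^{-itA_1}\Op(a_t)e^{itA_1}=\Op(\f)+\int_0^t(\dots)ds$ with the integral term bounded $H^{-1}\to L^2$ uniformly in $t\in(0,\tfrac14)$. Rearranging, $\Op(a_t)=e^{itA_1}\Op(\f)e^{-itA_1}+\widetilde R_t$; conjugating the whole identity by $e^{-itA_1}$ (which is unitary, hence preserves the $H^{-1}\to L^2$ boundedness of the remainder) yields $e^{-itA_1}\Op(a_t)e^{itA_1}=\Op(\f)+R_t'$, but what we actually want is the stated form $e^{-itA_1}\f(y)e^{itA_1}+R_t=\Op(a_t)$, i.e. $B_t=\Op(a_t)-R_t$, which is exactly the previous line read with $R_t:=-\widetilde R_t$ after renaming. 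I would present it in whichever direction matches how the lemma is used in the proof of $(\ref{red2})$.

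The main obstacle I anticipate is bookkeeping the remainder in the \emph{correct} topology: the assertion is not merely $R_t\in\Op S^{-1}$ (which would already give $L^2\to L^2$) but rather $R_t:H^{-1}\to L^2$ boundedly and \emph{uniformly in $t$} on the compact interval. Uniformity requires checking that all the symbol seminorms of $a_t$, of $\pa_t a_t$, and of the error symbols produced by $(\ref{compfor})$ are bounded uniformly for $t\in[0,\tfrac14]$ — this follows because the flow map $\y\mapsto y-t\pa_\y a_1(\y)$ and its $\y$-derivatives are uniformly bounded on that interval (using $|\pa_\y a_1|\le M$ and, for higher derivatives, that $a_1=\Re\sqrt{|\y|^2+i}\in S^1$ so $\pa_\y^\b a_1\in S^{1-|\b|}$), and because the Calderón–Vaillancourt bound and the symbol calculus remainders depend only on finitely many seminorms. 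A secondary subtlety is that $a_t$ depends on $y$ only through $\f$ and the flow, so $a_t\in S^{-\infty}$ with values even better than $S^{-1}$; one should be careful to state the lemma with the weakest hypothesis actually needed downstream ($H^{-1}\to L^2$) rather than overclaiming. If $A_1$ were a genuine pseudodifferential operator rather than a Fourier multiplier one would also need to worry about the lower-order part of its symbol contributing a subprincipal transport term, but since $A_1=a_1(D_\y)$ has no such part, that complication is absent here.
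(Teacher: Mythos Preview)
Your approach is the same as the paper's: differentiate the conjugated family, use the transport equation $\pa_t a_t+\{a_1,a_t\}=0$ to kill the principal part, invoke the composition formula $(\ref{compfor})$ to see the remainder $L_s=\tfrac{d}{ds}\Op(a_s)+[A_1,i\Op(a_s)]\in\Op S^{-1}$, and integrate. Two corrections are needed, however.

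First, neither $\f$ nor $a_t$ lies in $S^{-\infty}$. The classes $S^k$ measure decay in $\y$, not in $y$; a function $\f(y)$ that is compactly supported in $y$ but independent of $\y$ is in $S^0$, not $S^{-\infty}$. Likewise $a_t(y,\y)=\f(y-t\pa_\y a_1(\y))$ does not decay as $|\y|\to\infty$ (for $|y|$ small and $t\le\tfrac14$ it equals $1$ for all $\y$), so $a_t\in S^0$ only. This does not damage the argument: with $a_1\in S^1$ and $a_t\in S^0$ the formula $(\ref{compfor})$ still gives a remainder in $\Op S^{-1}$, which is exactly what is needed for $H^{-1}\to L^2$.

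Second, you have two compensating sign errors. Differentiating $e^{-itA_1}\Op(a_t)e^{itA_1}$ correctly produces $e^{-itA_1}\bigl(\Op(\pa_t a_t)-[iA_1,\Op(a_t)]\bigr)e^{itA_1}$, with a minus sign on the commutator; the principal parts then do \emph{not} cancel (one gets $-2\{a_1,a_t\}$, not $0$). Your cancellation only appears because you wrote the commutator with the wrong sign. The paper differentiates $e^{itA_1}\Op(a_t)e^{-itA_1}$ instead, obtaining $e^{itA_1}\bigl(\Op(\pa_t a_t)+[A_1,i\Op(a_t)]\bigr)e^{-itA_1}=e^{itA_1}L_te^{-itA_1}$ with $L_t\in\Op S^{-1}$; integrating from $0$ and multiplying through by $e^{-itA_1}(\cdot)e^{itA_1}$ gives $\Op(a_t)=e^{-itA_1}\f(y)e^{itA_1}+R_t$ directly in the stated form, with no further rearrangement needed.
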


\begin{proof}
Using $(e^{itA_1}\Op(a_t)e^{-itA_1})'|_{t=0}=\f(y)$, we have $\Op(a_t)=e^{-itA_1}\f(y)e^{itA_1}+R_t$, where
\begin{align*}
R_t=\int_0^te^{-i(t-s)A_1}(\frac{d}{ds}\Op(a_s)+[A_1,i\Op(a_s)])e^{i(t-s)A_1}ds=\int_0^te^{-i(t-s)A_1}L_se^{i(t-s)A_1}ds
\end{align*}
and $L_s=\frac{d}{ds}\Op(a_s)+[A_1,i\Op(a_s)]$. The formula $(\ref{compfor})$ implies that $L_s\in \Op S^{-1}$ is locally uniformly bounded in $t$. Since $e^{-itA_1}$ preserves $H^k(\re^n)$, it follows that $R_t\in B(H^{-1}(\re^n), L^2(\re^n))$ is locally uniformly bounded in $t$.
\end{proof}

\begin{lem}[Elliptic parametrix]\label{cutell}
For $0\leq t\leq \frac{1}{4}$, we have
\begin{align}\label{ellsupport}
(\supp\f_1)\times \re^n\subset \{(y,\y)\in \re^{2n}\mid \f(y-t\pa_{\y}a_1(\y))=1 \},
\end{align}
that is, $\f(y-t\pa_{\y}a_1(\y))$ is elliptic on $\supp\f \times\re^n$ in the phase space. Moreover, there exist $c_t\in S^{0}$ and $r_t\in S^{-1}$ which are locally uniformly bounded in $t$ such that
\begin{align}\label{ellcomp}
\f_1(y)=\Op(c_t)\Op(a_t)+\Op(r_t)
\end{align}
where we recall $a_t(y,\y):=\f(y-t\pa_{\y}a_1(\y))$.
\end{lem}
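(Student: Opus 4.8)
The plan is to verify the support inclusion \eqref{ellsupport} by a direct estimate, and then to construct the elliptic parametrix \eqref{ellcomp} by the standard iterative symbol construction. For the first part, suppose $y\in\supp\f_1$, so $|y|\leq M/4$ by the choice of $\f_1$. Since $|\pa_{\y}a_1(\y)|\leq M$ for all $\y$ and $0\leq t\leq\frac14$, we get $|y-t\pa_{\y}a_1(\y)|\leq M/4+M/4=M/2<M$. As $\f=1$ on $\{|y|\leq M\}$, this gives $\f(y-t\pa_{\y}a_1(\y))=1$ on $(\supp\f_1)\times\re^n$, which is exactly \eqref{ellsupport}. (The constant budget here is the reason the supports of $\f$, $\f_1$ were chosen with the factors $M$, $M/4$, $M/8$ in the paragraph before the lemma.)

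For the parametrix, recall from Egorov's theorem that $a_t(y,\y)=\f(y-t\pa_{\y}a_1(\y))\in S^0$, locally uniformly in $t$. On the region where $a_t=1$, which by \eqref{ellsupport} contains a full conic neighborhood of $\supp\f_1\times\re^n$, the symbol $a_t$ is elliptic. I would therefore seek $c_t\in S^0$ with principal part $c_t^{(0)}:=\f_1(y)\chi_0(y,\y)/a_t(y,\y)$, where $\chi_0\in S^0$ is a cutoff equal to $1$ on $\supp\f_1\times\re^n$ and supported in $\{a_t=1\}$ — so in fact $c_t^{(0)}=\f_1$ there — and then correct by lower-order terms: writing $\Op(c_t^{(0)})\Op(a_t)=\Op(\f_1)+\Op(e_t^{(1)})$ with $e_t^{(1)}\in S^{-1}$ by the composition formula \eqref{compfor}, one subtracts $\Op(e_t^{(1)}a_t^{-1}\chi_0)$, iterates, and takes an asymptotic sum $c_t\sim\sum_j c_t^{(j)}$ with $c_t^{(j)}\in S^{-j}$. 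Each step uses only \eqref{compfor} and the ellipticity of $a_t$ on the relevant set, and all symbols so produced are locally uniformly bounded in $t$ because $a_t$ and its reciprocal (on $\{a_t=1\}$, trivially equal to $1$) are. The remainder $r_t:=\f_1-\Op(c_t)\Op(a_t)$ then lies in $\bigcap_k\Op S^{-k}\subset\Op S^{-1}$, locally uniformly in $t$, giving \eqref{ellcomp}.

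The only mild subtlety — and the step I would treat most carefully — is the bookkeeping that makes everything \emph{locally uniform in $t$}: one must check that the seminorm bounds on $a_t$, hence on $c_t^{(j)}$ and on $r_t$, are uniform for $t$ in compact subsets of $[0,\tfrac14]$. This follows because $a_t$ is the composition of the fixed function $\f$ with the map $(y,\y)\mapsto y-t\pa_{\y}a_1(\y)$, whose $\y$-derivatives are controlled by the fixed bound $|\pa_\y^\b a_1|\leq C_\b\jap{\y}^{-|\b|}$ (from $a_1=\Re a$ and Lemma \ref{imagcal}) together with $|t|\leq\tfrac14$; no genuine analytic difficulty arises, it is purely a matter of tracking constants through the Leibniz and chain rules. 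Everything else is the textbook elliptic parametrix construction (cf. \cite[\S 18]{Ho}).
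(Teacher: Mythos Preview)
Your proof of the support inclusion \eqref{ellsupport} is exactly the paper's argument. For the parametrix \eqref{ellcomp}, your approach is correct but more elaborate than what the paper does: the paper simply sets $c_t(y,\y)=\f_1(y)/a_t(y,\y)$, notes this is well-defined by \eqref{ellsupport} (indeed it equals $\f_1$, since $a_t\equiv 1$ on $\supp\f_1\times\re^n$), and applies the composition formula \eqref{compfor} \emph{once} to get $\Op(c_t)\Op(a_t)=\Op(\f_1)+\Op(r_t)$ with $r_t\in S^{-1}$. No iteration or asymptotic sum is needed, because the statement only asks for $r_t\in S^{-1}$, not $S^{-\infty}$. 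Your full parametrix construction yields a stronger remainder but is unnecessary here; in fact your first step $c_t^{(0)}=\f_1$ already suffices, and you could stop there.
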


\begin{proof}
For $0\leq t\leq \frac{1}{4}$ and $|y|\leq \frac{M}{4}$, we have
\begin{align*}
|y-t\pa_{\y}a(\y)|\leq |y|+|t||\pa_{\y}a(\y)|\leq \frac{M}{4}+\frac{1}{4}\cdot M< M.
\end{align*}
which implies $(\ref{ellsupport})$. We note that the smooth function $c_t(y,\y)=\f_1(y)/a_t(y.\y)$ is well-defined by the support condition $(\ref{ellsupport})$. Since $a_t\in S^0$ locally uniformly bounded in $t$, $c_t$ is also locally uniformly bounded in $t$. The composition formula $(\ref{compfor})$ gives $(\ref{ellcomp})$ with a symbol $r_t\in S^{-1}$ locally uniformly bounded in $t$.
\end{proof}

Now we prove $(\ref{red2})$. Set
\begin{align*}
C_1:=\sup_{t\in (0,\frac{1}{4})}\max(\|\Op(c_t)\|_{L^2(\re^n)\to L^2(\re^n)},\|R_t\|_{H^{-\frac{1}{2}-\e}(\re^n)\to L^2(\re^n)}, \|\Op(r_t)\|_{H^{-\frac{1}{2}-\e}(\re^n)\to L^2(\re^n)} ).
\end{align*}
For $t\in (0,\frac{1}{4})$, by Lemmas \ref{Ego} and \ref{cutell}, we obtain
\begin{align*}
\|\f_1\jap{D_{\y}}^{\frac{1}{2}+\e}e^{-itA_1}u\|_{ L^2(\re^n_y)} \leq& C_1\|\Op(a_t)\jap{D_{\y}}^{\frac{1}{2}+\e}e^{-itA_1}u\|_{ L^2(\re^n_y)}+C_1\|u\|_{L^2(\re^n)}\\
\leq&\|\f\jap{D_{\y}}^{\frac{1}{2}+\e}u\|_{L^2(\re^n)}+(C_1^2+C_1)\|u\|_{L^2(\re^n)},
\end{align*}
where we use the unitarity of $e^{-itA_1}$. Integrating this inequality in $t$, we obtain $(\ref{red2})$.

\appendix

\section{A short proof for smoothing effects on the Minkowski spacetimes}

In this appendix, we give a short proof for $(\ref{locsm1})$ and $(\ref{locsm2})$ using the explicit formula $(\ref{resformula})$.

\begin{lem}\label{lemapp}
Set
\begin{align*}
I=\int_{\re}e^{-i|t-s|A}f(s)ds.
\end{align*}
Then, for $\e>0$ we have
\begin{align*}
\|\jap{t}^{-\frac{1}{2}-\e}I\|_{L^2(\re^{n+1})}\leq C\|\jap{D_y}^{\frac{1}{2}}f\|_{L^2(\re^{n+1})},\quad \|\jap{t}^{-\frac{1}{2}-\e}I\|_{L^2(\re^{n+1})}\leq C\|\jap{t}^{\frac{1}{2}+\e}f\|_{L^2(\re^{n+1})}
\end{align*}

\end{lem}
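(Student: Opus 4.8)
The plan is to reduce both bounds to one-dimensional weighted estimates in $t$ at a fixed spatial frequency, via the Fourier transform $\mathcal{F}_y$ in $y$. With $A=A_1-iA_2$ as in the text (so $A_1,A_2$ are commuting functions of $D_y$ and $A_2\ge 0$), one has $e^{-i\tau A}=e^{-i\tau A_1}e^{-\tau A_2}$ for $\tau\ge0$, hence
\begin{align*}
|\mathcal{F}_y I(t,\y)|\le \int_\re e^{-a_2(\y)|t-s|}\,|\mathcal{F}_y f(s,\y)|\,ds,
\end{align*}
where $a_2(\y)$ is the symbol of $A_2$. By Lemma~\ref{imagcal}, $C^{-1}\jap{\y}^{-1}\le a_2(\y)\le C\jap{\y}^{-1}$; thus $a_2(\y)>0$, $a_2(\y)$ is bounded, and the kernel $e^{-a_2(\y)|\cdot|}$ has $L^1_s$-mass $2/a_2(\y)\le 2C\jap{\y}$. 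By Plancherel in $y$ it then suffices to bound the $L^2_t$-norm of $\jap{t}^{-\frac12-\e}$ times the right-hand side above, uniformly in $\y$, by $C\jap{\y}^{1/2}\|\mathcal{F}_y f(\cdot,\y)\|_{L^2_s}$ for the first inequality and by $C\|\jap{s}^{\frac12+\e}\mathcal{F}_y f(\cdot,\y)\|_{L^2_s}$ for the second.

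For the first inequality I would apply the Cauchy--Schwarz inequality in $s$ against the weight $e^{-a_2(\y)|t-s|}$ --- producing the factor $2/a_2(\y)$ --- and then Fubini, using the elementary bound $\int_\re\jap{t}^{-1-2\e}e^{-a_2(\y)|t-s|}\,dt\le\int_\re\jap{t}^{-1-2\e}\,dt=:C_\e<\infty$ (finite since $\e>0$). This gives the bound $2C_\e\,a_2(\y)^{-1}\|\mathcal{F}_y f(\cdot,\y)\|_{L^2_s}^2\le 2CC_\e\jap{\y}\|\mathcal{F}_y f(\cdot,\y)\|_{L^2_s}^2$. Taking square roots, integrating in $\y$, and using Plancherel yields the first estimate, the $\jap{\y}$-weight in $L^2$ being exactly the $H^{1/2}$-gain.

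For the second inequality I would view the right-hand side as an integral operator in $(t,s)$ applied to $\jap{s}^{\frac12+\e}\mathcal{F}_y f(\cdot,\y)$, with kernel bounded in absolute value by $\jap{t}^{-\frac12-\e}e^{-a_2(\y)|t-s|}\jap{s}^{-\frac12-\e}\le\jap{t}^{-\frac12-\e}\jap{s}^{-\frac12-\e}$. The dominating kernel is Hilbert--Schmidt on $L^2_s\to L^2_t$ with norm $\int_\re\jap{t}^{-1-2\e}\,dt=C_\e$, a bound independent of $a_2(\y)$, hence of $\y$; so the operator norm is $\le C_\e$ uniformly in $\y$, and integrating in $\y$ gives the second estimate. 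I do not expect a genuine obstacle: once on the Fourier side the argument is elementary, and the only point needing care is that all constants be uniform in $\y$, equivalently in the damping rate $a_2(\y)\in(0,C_0]$; note it is the integrability of $\jap{t}^{-1-2\e}$, which requires $\e>0$, that makes both the Fubini step and the Hilbert--Schmidt step converge. (If a right-hand side is infinite the inequality is trivial; otherwise $\mathcal{F}_y f(\cdot,\y)\in L^2_s$ for almost every $\y$ and $I$ is well defined as a $t$-convolution, so all manipulations are justified.)
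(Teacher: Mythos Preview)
Your argument is correct and follows the same route as the paper: pass to the Fourier side in $y$, use $|e^{-i|t-s|a(\y)}|=e^{-a_2(\y)|t-s|}$, and exploit the integrability of $\jap{t}^{-1-2\e}$ together with $a_2(\y)^{-1}\sim\jap{\y}$. The only cosmetic difference is that for the first inequality the paper bounds $\hat I(\cdot,\y)$ directly in $L^\infty_t$ by a single Cauchy--Schwarz/Young step (giving the factor $\|e^{-a_2|\cdot|}\|_{L^2_t}=a_2^{-1/2}$) before multiplying by $\jap{t}^{-\frac12-\e}\in L^2_t$, whereas you split the kernel and use Fubini; and for the second inequality the paper simply cites H\"older where you invoke the Hilbert--Schmidt bound---these are the same computation.
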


\begin{proof}
The second inequality immediately follows from H\"older's inequality. Thus we shall show the first inequality. We write $\hat{f}(t,\y)=\mathcal{F}_{y\to \y}f(t,\y)$ and $a_2:=-\Im a=-\Im \sqrt{|\y|^2-i}$, where $\mathcal{F}_{y\to\y}$ denotes the Fourier transform from the variable $y$ to the variable $\y$.
Fourier transforming in $y\to \y$ and using Young's inequality, we have
\begin{align*}
|\hat{I}(t,\y)|\leq |\int_{-\infty}^t  e^{-(t-s)a_2}|\hat{f}(s)|ds|\leq \|e^{-ta_2}\|_{L^2_t} \|\hat{f}(t,\y)\|_{L^2_t}\leq C\|a_2^{-\frac{1}{2}}\hat{f}(t,\y)\|_{L^2_t}
\end{align*}
This calculation gives $\|\jap{t}^{-\frac{1}{2}-\e}\hat{I}(t,\y)\|_{L^2_t}\leq \|\jap{t}^{-\frac{1}{2}-\e}\|_{L^2_t}\|\hat{I}(t,\y)\|_{L^{\infty}_t}\leq C\|a_2^{-\frac{1}{2}}\hat{f}(t,\y)\|_{L^2_t}$. Plancherel's theorem and $(\ref{Soboleveq})$ imply
\begin{align*}
\|\jap{t}^{-\frac{1}{2}-\e}I\|_{L^2(\re^{n+1})}\leq C\|a_2^{-\frac{1}{2}}\hat{f}\|_{L^2(\re^{n+1})}\leq C\|\jap{D_y}^{\frac{1}{2}}f\|_{L^2_tL^2_y}.
\end{align*}
\end{proof}

This lemma and the formula $(\ref{resformula})$ immediately imply
\begin{align}\label{locsmoy}
\|\jap{t}^{-\frac{1}{2}-\e}(P-i)^{-1}f\|_{L^2(\re^{n+1})}\leq C\|A^{-1}\jap{D_y}^{\frac{1}{2}}f\|_{L^2(\re^{n+1})}\leq C\|\jap{D_y}^{-\frac{1}{2}}f\|_{L^2(\re^{n+1})}.
\end{align}
Let $\g\in C_c^{\infty}(\re;[0,1])$ such that $\g(s)=1$ on $|s|\leq \frac{1}{4}$ and $\supp g\subset \{|s|\leq \frac{1}{2}\}$. Set $\f(D_{x})=\g(P/(-\Delta_x+1))$. Since $P$ is elliptic on the essential support of $1-\f(D_{x})$, we have $\|(1-\f(D_x))\jap{D_x}^{\frac{1}{2}}(P-i)^{-1}f\|_{L^2(\re^{n+1})}\leq C\|f\|_{L^2(\re^{n+1})}$. Moreover, since $D_{t}\sim D_y$ on the essential support of $\f(D_x)$, we have 
\begin{align*}
\|\jap{t}^{-\frac{1}{2}-\e}\f(D_x)\jap{D_x}^{\frac{1}{2}}(P-i)^{-1}f\|_{L^2(\re^{n+1})}\leq C\|\jap{t}^{-\frac{1}{2}-\e}\f(D_x)\jap{D_y}^{\frac{1}{2}}(P-i)^{-1}f\|_{L^2(\re^{n+1})}.
\end{align*}
Combining these inequalities with $(\ref{locsmoy})$, we obtain $(\ref{locsm1})$.

The mapping property $(\ref{locsm2})$ immediately follows from the formulae $(\ref{resformula})$, 
\begin{align*}
\pa_t(P-i)^{-1}f(t,y)=\frac{1}{2}\int_{-\infty}^te^{-i(t-s)A}f(s,y)ds+\frac{1}{2}\int_{+\infty}^te^{i(t-s)A}f(s,y)ds,
\end{align*}
and the second inequality in Lemma \ref{lemapp}. See also the proof of \cite[Theorem C.1]{DW}.

\end{document}